\newtheorem{theorem}{Theorem}
\newtheorem{cor}{Corollary}
\newtheorem{definition}{Definition}
\newtheorem{lem}{Lemma}
\newtheorem{claim}{Claim}
\newcommand{\sandwich}[3]{\mbox{$ \Braket{ #1 | #2 | #3 } $}}
\newcommand{\average}[1]{\mbox{$\Braket{#1}$}}
\newcommand{\Tr}[1]{\mbox{$\mathrm{Tr}\left[#1\right]$}}
\newcommand{\pTr}[2]{\mbox{$\mathrm{Tr}_{#1}\left[#2\right]$}}
\newcommand{\Id}{\mathds{1}}
\newcommand{\beq}{\begin{eqnarray}}
\newcommand{\eeq}{\end{eqnarray}}
\newcommand\norm[1]{\left\Vert#1\right\Vert}
\newcommand{\ps}{\ket{\psi}}
\newcommand{\Ah}{\hat{A}}
\newcommand{\Bh}{\hat{B}}
\newcommand{\tr}{\mbox{Tr}\,}
\begin{document}
\title{Separation of finite and infinite-dimensional quantum correlations, with infinite question or answer sets}

\author{Andrea Coladangelo}
\author{Jalex Stark}

\affil{Computing and Mathematical Sciences,	
Caltech\\ 	
{\{acoladan,jalex\}@caltech.edu}
}

\maketitle

\abstract{Completely determining the relationship between quantum correlation sets is a long-standing open problem, known as Tsirelson's problem. Following recent progress by Slofstra \cite{Slofstra16,Slofstra17}, only two instances of the problem remain open. One of them is the question of whether the set of finite-dimensional quantum correlations is strictly contained in the set of infinite-dimensional ones (i.e. whether $\mathcal{C}_q \neq \mathcal{C}_{qs}$). The usual formulation of the question assumes finite question and answer sets. 
In this work, we show that, when one allows for either infinite answer sets (and finite question sets) or infinite question sets (and finite answer sets), there exist correlations that are achievable using an infinite-dimensional quantum strategy, but not a finite-dimensional one. For the former case, our proof exploits a recent result \cite{CGS16}, which shows self-testing of any pure bipartite entangled state of arbitrary local dimension $d$, using question sets of size $3$ and $4$ and answer sets of size $d$. For the latter case, a key step in our proof is to show a novel self-test, inspired by \cite{CGS16}, of all bipartite entangled states of any local dimension $d$, using question sets of size $O(d)$, and answer sets of size $4$ and $3$ respectively.

\section{Introduction}
Given question sets $\mathcal{X}$ and $\mathcal{Y}$ and answer sets $\mathcal{A}$ and $\mathcal{B}$, a (bipartite) \textit{correlation} is a  collection of conditional probability distributions $\{p(a,b|x,y): a\in \mathcal{A}, b \in \mathcal{B}\}_{(x,y)\in \mathcal{X}\times \mathcal{Y}}$. 
The long-standing problem of completely determining the relationship between variants of quantum correlation sets is known as Tsirelson's problem \cite{Tsi06,Fri12}. 

We let $\mathcal{C}_q$ be the set of correlations which can be realized by local projective measurements on a shared bipartite finite-dimensional quantum state in $\mathcal{H}_A\otimes \mathcal{H}_B$, for Hilbert spaces $\mathcal{H}_A$ and $\mathcal{H}_B$.  $\mathcal{C}_{qs}$ is the relaxation where we allow $\mathcal{H}_A$ and $ \mathcal{H}_B$ to be infinite-dimensional, while
$\mathcal{C}_{qa}$ is defined as the closure of $C_{q}$, i.e. limits of quantum correlations on finite-dimensional Hilbert spaces, and $C_{qc}$ is the set of possibly infinite-dimensional quantum correlations arising in the commuting operator model. These definitions implicitly assume that question and answer sets are finite.

Thanks to the containment  $\mathcal{C}_{qs} \subseteq \mathcal{C}_{qa}$, we know that $\mathcal{C}_{qa}$ is also the closure of $\mathcal{C}_{qs}$ \cite{Scholz08}. 
The following is the known hiearchy
\begin{equation}
\mathcal{C}_q \subseteq \mathcal{C}_{qs} \subseteq \mathcal{C}_{qa} \subseteq \mathcal{C}_{qc}
\end{equation}
with recent progress by Slofstra showing first that $\mathcal{C}_{qs} \neq \mathcal{C}_{qc}$ \cite{Slofstra16}, and later strengthening this to  $\mathcal{C}_{qs} \neq \mathcal{C}_{qa}$ \cite{Slofstra17}. 

The only two outstanding instances of Tsirelson's problem are whether $\mathcal{C}_{q} = \mathcal{C}_{qs}$ and whether $\mathcal{C}_{qa} = \mathcal{C}_{qc}$. In this work, we make progress related to the former. We show that if one considers either correlations on finite question sets and \textit{infinite} answer sets or correlations on \textit{infinite} question sets and finite answer sets, then there is separation between correlations arising from finite and infinite-dimensional quantum strategies.

We make these statements more precise. We let $\mathcal{C}_q^{m,n,r,s}$ and $\mathcal{C}_{qs}^{m,n,r,s}$ be the sets of quantum correlations that have question sets of size $m$ and $n$ and answer sets of size $r$ and $s$ respectively on finite and infinite-dimensional Hilbert spaces. We denote by $\mathcal{C}_{q}^{m,n,\infty,\infty}$ and $\mathcal{C}_{qs}^{m,n,\infty,\infty}$ their respective variants with answer sets of countably infinite size, and by $\mathcal{C}_{q}^{\infty, \infty, r,s}$ and $\mathcal{C}_{qs}^{\infty,\infty, r,s}$ their variants with question sets of countably infinite size. Since defining  $\mathcal{C}_{q}^{m,n,\infty,\infty}$ and $\mathcal{C}_{q}^{\infty, \infty, r,s}$ is not entirely unambiguous, we give a formal definition in Section \ref{sec: preliminaries}.

For later convenience, we denote by $\mathcal{C}_{q \leq N}^{m,n,r,s}$ the subset of $\mathcal{C}_{q}^{m,n,r,s}$ of correlations obtained by states of Schmidt rank at most $N$ (likewise when $m$ and $n$ or $r$ and $s$ are $\infty$). 

To the best of our knowledge, the first result giving a non-local game (with classical questions) whose optimal winning probability can be approximated arbitrarily well, but not achieved perfectly, with finite-dimensional quantum resources is found in  \cite{manvcinska2014unbounded}. The game has two questions per party and countably infinite answer sets. However, the sequence of correlations that the authors present does not have a limit, since they are uniform distributions on increasingly large sets. On the other hand, a candidate set of correlations which may be in $\mathcal{C}_{qs}$ and not in $\mathcal{C}_{q}$ is the set of correlations attaining maximal violation of the $I_{3322}$ Bell inequality \cite{I3322Froissart}. Here, numerical evidence suggests that finite-dimensional states can get arbitrarily close to the maximal violation, but are not enough to attain the maximum \cite{I3322Pal}. Unlike in the case of \cite{manvcinska2014unbounded}, where the limit of correlations does not exist, it is believed that correlations attaining maximal violation of $I_{3322}$ lie either in $\mathcal{C}_{qs}$ or in $\mathcal{C}_{qa}$; the two sets were recently shown to be different by Slofstra \cite{Slofstra17}. In \cite{Slofstra17}, Slofstra constructs a sequence of correlations in $\mathcal C_{qs}$ which has a limit, but not in $\mathcal C_{qs}$, proving that $C_{qs} \neq C_{qa}$, and hence that $C_{qs}$ is not closed. 

Our contribution is that we construct a sequence of correlations in $\mathcal C_q^{3,4, \infty,\infty}$ which has a limit, and we show that this limit is in $C_{qs}^{3,4,\infty, \infty}$ but not $C_q^{3,4,\infty, \infty}$, thus proving separation of $C_q^{3,4,\infty, \infty}$ and  $C_{qs}^{3,4,\infty, \infty}$. We also construct a sequence of correlations in $\mathcal C_q^{\infty,\infty,4,3}$ which has a limit that is in $C_{qs}^{\infty,\infty,4,3}$ but not in $C_q^{\infty,\infty,4,3}$, proving separation of $C_q^{\infty,\infty,4,3}$ and  $C_{qs}^{\infty,\infty,4,3}$. In both cases, we show that any finite amount of entanglement is not enough to achieve the limit, while one can write down a natural infinite-dimensional strategy that achieves the limit.

Our first main theorem is the following. 

\begin{theorem}
\label{thm1}
There exists a correlation $p^* \in \mathcal{C}_{qs}^{3,4,\infty,\infty}$ such that, if $p \in \mathcal{C}_{q \leq N}^{3,4,\infty,\infty}$ and $p$ is $\delta$-close to $p^*$ (according to the distance $|\cdot|_{corr}$ defined in Definition \ref{def: distance}), then  $N = \Omega\Big(\frac{1}{\delta^{1/32}}\Big)$.
\end{theorem}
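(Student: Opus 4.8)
The plan is to build $p^*$ as a "self-test plus padding" correlation: take the family of self-tests from \cite{CGS16} that robustly certifies, with question sets of size $3$ and $4$, a pure bipartite entangled state $\ket{\psi_d}$ of local dimension $d$, and stitch these together across all $d$ into a single correlation with a countably infinite answer set. Concretely, I would fix a sequence of target states $\ket{\psi_d}$ whose Schmidt coefficients are chosen so that $\ket{\psi_d}$ genuinely requires local dimension $d$ (e.g.\ all Schmidt coefficients nonzero and distinct), and define $p^*$ on the same $3\times 4$ question set but with answers ranging over $\mathbb{N}$, by letting answer $a$ in "block $d$" reproduce the self-test statistics for $\ket{\psi_d}$, with the blocks interleaved with geometrically decaying weights $2^{-d}$. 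The infinite-dimensional strategy achieving $p^*$ is then the obvious direct sum $\bigoplus_d \ket{\psi_d}$ with block-diagonal measurements; this shows $p^* \in \mathcal{C}_{qs}^{3,4,\infty,\infty}$.

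For the lower bound, suppose $p \in \mathcal{C}_{q\le N}^{3,4,\infty,\infty}$ is $\delta$-close to $p^*$. The idea is that being $\delta$-close forces $p$ to approximately reproduce the self-test statistics within \emph{each} block $d$, up to an error that degrades with the block weight: in block $d$, the statistics are reproduced to within roughly $\delta \cdot 2^{d}$ (one loses the weight factor when conditioning on, or restricting to, a low-probability block). By the robustness statement of the \cite{CGS16} self-test — which I would cite with an explicit modulus of continuity, presumably of the form: closeness $\epsilon$ implies the physical state is $\mathrm{poly}(\epsilon)$-close to $\ket{\psi_d} \oplus (\text{junk})$ — the restriction of $p$'s strategy to block $d$ must contain an approximate copy of $\ket{\psi_d}$, hence have Schmidt rank at least $d$ once the error is small enough, i.e.\ once $\delta \cdot 2^{d} = \mathrm{poly}^{-1}(\text{something} \ll 1)$. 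Choosing the largest such $d$ and tracking the polynomial exponents from the \cite{CGS16} robustness bound (these are what produce the $1/32$) gives $N \ge d = \Omega(\log(1/\delta))$ — but to match the stated $\Omega(\delta^{-1/32})$ one instead wants the target states packed so that dimension grows polynomially in the inverse error, e.g.\ taking a \emph{single} well-chosen high-dimensional self-test and reading off that $\delta$-closeness forces $N = \Omega(\delta^{-1/32})$ directly from the robustness exponent, rather than an infinite family; the infinite answer set then enters only to make the relevant self-test expressible with bounded question sets. I would present whichever packaging makes the exponent bookkeeping cleanest, most likely the latter: a fixed self-testable state of dimension $d$ with $\delta$-closeness $\Rightarrow$ $O(\delta^{c})$-closeness to $\ket{\psi_d}$ $\Rightarrow$ rank $\ge d$, optimized over $d$ against the robustness threshold.

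The key steps in order: (i) state the \cite{CGS16} self-test with an explicit robustness bound and extract the rank lower bound as a function of the statistical error; (ii) define $p^*$ and exhibit the infinite-dimensional strategy, verifying $p^* \in \mathcal{C}_{qs}^{3,4,\infty,\infty}$; (iii) given $p$ that is $\delta$-close, localize the error into the relevant sub-block/sub-correlation, showing the induced strategy there is $O(\mathrm{poly}(\delta))$-close to the ideal self-test statistics; (iv) invoke robustness to conclude the local state has an approximate $d$-dimensional entangled component, hence Schmidt rank $\ge d$, and optimize the choice of $d$ against $\delta$ to get $N = \Omega(\delta^{-1/32})$. The main obstacle I expect is step (iii) combined with the exponent accounting in (iv): the self-testing robustness in \cite{CGS16} is only polynomial with a smallish exponent, the distance $|\cdot|_{corr}$ must be related carefully to whatever metric the self-test theorem is stated in (operator-norm on measurement operators vs.\ $\ell_1$ on probabilities vs.\ state fidelity), and each such conversion costs a constant power of $\delta$ — so the whole argument is a chain of polynomial losses that must be composed without accidentally killing the bound, and the final $1/32$ is exactly the product of these losses.
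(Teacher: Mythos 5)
Your high-level architecture (robust CGS16 self-test $\Rightarrow$ closeness to a high-Schmidt-rank target $\Rightarrow$ rank lower bound) matches the paper's, but neither of the two concrete packagings you offer for $p^*$ actually works, and the construction that does work is the missing idea. Your first packaging (interleaved blocks for each $d$ with weights $2^{-d}$) fails for the reason you yourself identify: conditioning on block $d$ costs a factor $2^{d}$, so you only certify dimension $\Omega(\log(1/\delta))$, not $\Omega(\delta^{-1/32})$; it also leaves open whether the interleaved correlation self-tests anything at all, since CGS16 certifies a single state, not a mixture of self-tests sharing one $3\times 4$ question set. Your second packaging (``a single fixed self-testable state of dimension $d$'') cannot work either: if $p^*$ is the ideal correlation of a fixed finite-dimensional state, then $p^*\in\mathcal{C}_{q\leq d}^{3,4,\infty,\infty}$ itself, and taking $p=p^*$, $\delta\to 0$ contradicts the claimed bound. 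What the paper does instead is take $p^*$ to be the pointwise limit of the CGS16 ideal correlations $\hat p^*_N$ for the truncations $\ket{\Psi_N}\propto\sum_{i<N}(i+1)^{-8}\ket{ii}$ of one fixed infinite-dimensional state. The polynomial decay $(i+1)^{-8}$ is doing all the quantitative work, in two opposing directions: it must decay fast enough that $|\hat p^*_N-p^*_\infty|_{corr}=O(N^{-16})$ beats the $O(N^3\epsilon^{1/4})$ robustness blowup of the self-test, yet slowly enough that the Schmidt tail of $\ket{\Psi_N}$ beyond position $N'$ has norm $\Omega(N'^{-8})$, which is what converts ``$\Phi(\ket{\psi})$ is close to $\ket{\Psi_N}\otimes\ket{\mathrm{extra}}$'' into a lower bound on the Schmidt rank $N'$. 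Your proposal never pins down this decay or the tension it resolves, and ``all Schmidt coefficients nonzero and distinct'' is not a substitute: one needs a quantitative lower bound on the tail mass as a function of $N'$.

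A secondary gap: you do not address how to pass from a correlation with infinite answer sets to something the finite-answer-set self-test applies to. The paper handles this by post-processing the $\delta$-close strategy so that all answers outside $\{0,\dots,N-1\}$ map to $0$, checks (using closeness to $\hat p^*_N$) that this changes the correlation by only $O(\epsilon(N)+\delta)$, and only then invokes the dimension-$N$ self-test. Without some such step, ``localize the error into the relevant sub-block'' in your step (iii) is not a proof. These are fixable, but they are exactly where the $1/32$ comes from, so the proof as proposed is incomplete.
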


Note that Theorem \ref{thm1} can be seen as a dimension witness. And since $p^* \in \mathcal{C}_{qs}^{3,4,\infty,\infty}$, as a corollary it immediately implies the separation:
\begin{cor}
\label{cor1}
$\mathcal{C}_{q}^{3,4,\infty,\infty} \neq \mathcal{C}_{qs}^{3,4,\infty,\infty} $
\end{cor}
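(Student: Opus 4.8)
Corollary \ref{cor1} follows immediately from Theorem \ref{thm1}: were $p^*$ in $\mathcal{C}_q^{3,4,\infty,\infty}$ it would be realized exactly by a finite-dimensional strategy, hence by a state of some finite Schmidt rank $N$, and taking $p = p^*$ (so $\delta$ arbitrarily small) in Theorem \ref{thm1} would force $N = \Omega(\delta^{-1/32}) \to \infty$, a contradiction. So the content is Theorem \ref{thm1}, which I would prove as follows. The plan is to build $p^*$ from the natural infinite-dimensional limit of the self-tests of \cite{CGS16} and then use the \emph{robustness} of those self-tests, together with a Schmidt-rank obstruction, to lower-bound the dimension of any finite-dimensional approximation.

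First, fix a strictly decreasing summable sequence $\lambda_1 \geq \lambda_2 \geq \cdots > 0$ with $\sum_i \lambda_i = 1$ (I would take a polynomial decay $\lambda_i \propto i^{-p}$ and fix $p$ at the end), set $|\psi_\infty\rangle = \sum_{i\geq 1}\sqrt{\lambda_i}\,|i\rangle_A|i\rangle_B$, and let $p^*$ be the correlation of the natural infinite-dimensional analogue of the \cite{CGS16} strategy applied to $|\psi_\infty\rangle$. Because \cite{CGS16} self-tests the $d$-dimensional entangled state with $3$ and $4$ questions and $d$ answers, this limiting strategy has $3$ and $4$ questions and countably many answers, so $p^* \in \mathcal{C}_{qs}^{3,4,\infty,\infty}$. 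For each $n$, let $|\psi_n\rangle := (\sum_{i\leq n}\lambda_i)^{-1/2}\sum_{i\leq n}\sqrt{\lambda_i}\,|i\rangle_A|i\rangle_B$ (local dimension, hence Schmidt rank, $n$) and let $p_n \in \mathcal{C}_q^{3,4,\infty,\infty}$ be its \cite{CGS16} self-test correlation, with the $n$-element answer set embedded in $\mathbb{N}$. One checks that $|p_n - p^*|_{corr} \leq f(n)$ for an explicit $f(n) \to 0$: $\||\psi_n\rangle - |\psi_\infty\rangle\| = O\big((\sum_{i>n}\lambda_i)^{1/2}\big)$, and the \cite{CGS16} measurement operators depend only on \emph{ratios} of consecutive Schmidt coefficients, which the renormalization leaves untouched, so $p_n$ and $p^*$ differ only by the truncation, at the same order. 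Thus $p_n \to p^*$, reconfirming $p^* \in \mathcal{C}_{qs}^{3,4,\infty,\infty}$.

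The main step: let $p \in \mathcal{C}_{q\leq N}^{3,4,\infty,\infty}$ be $\delta$-close to $p^*$, realized by a state $|\psi\rangle$ of Schmidt rank $\leq N$, and put $n := 2N$. Then $|p - p_n|_{corr} \leq \delta + f(n)$ by the triangle inequality. Applying the robustness of the \cite{CGS16} self-test at local dimension $n$, there exist local isometries $\Phi_A, \Phi_B$ with $\big\|(\Phi_A\otimes\Phi_B)(|\psi\rangle\otimes|00\rangle) - |\mathrm{aux}\rangle\otimes|\psi_n\rangle\big\| \leq g(\delta + f(n), n)$, where $g(\varepsilon, n) = \mathrm{poly}(n)\,\varepsilon^{1/k}$ for a constant $k$ extracted from \cite{CGS16} (the $\mathrm{poly}(n)$ also absorbing inverse powers of the least Schmidt coefficient $\lambda_n^{(n)}$, which is polynomially small in $n$ for polynomial decay). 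Since local isometries and tensoring with a product state preserve Schmidt rank, the left-hand vector has Schmidt rank $\leq N$, whereas $|\mathrm{aux}\rangle\otimes|\psi_n\rangle$ has Schmidt rank $\geq n > N$, with tail of squared Schmidt coefficients beyond index $N$ at least $\sum_{N<i\leq n}\lambda_i^{(n)}$; as the minimal Euclidean distance from a pure state to the set of Schmidt-rank-$\leq N$ states is the norm of its discarded tail, we get $g(\delta + f(n), n) \geq \big(\sum_{N<i\leq n}\lambda_i^{(n)}\big)^{1/2}$. Inverting $g$ and plugging in $n = 2N$ and $\lambda_i \propto i^{-p}$ yields $\delta \geq \Theta(N^{-c(p,k)}) - f(2N)$; since $f(2N)$ decays like a fixed power of $N$, choosing $p$ large enough makes it negligible against the first term, and tuning $p$ so that $c(p,k) = 32$ gives $\delta = \Omega(N^{-32})$, i.e. $N = \Omega(\delta^{-1/32})$.

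The hard part is the tension in the last step between the degradation of the self-testing robustness $g(\varepsilon, n)$ as the target dimension $n$ grows — via explicit $\mathrm{poly}(n)$ factors and the shrinking smallest Schmidt coefficient — and the rate at which the truncation error $f(n)$ vanishes; the sequence $\{\lambda_i\}$ and the relation $n = n(N)$ must be chosen so the latter wins, and carefully accounting for the several root losses (triangle inequality, the rank-tail bound, inverting $g$) is what fixes the exponent $1/32$. A secondary issue, dealt with by the formal definitions of Section \ref{sec: preliminaries} and Definition \ref{def: distance}, is to make ``Schmidt rank $\leq N$'' and the isometry-invariant distance $|\cdot|_{corr}$ precise for countably infinite answer sets, and to ensure the auxiliary state $|\mathrm{aux}\rangle$ from the self-test (which can be taken product) does not dilute the Schmidt-rank count of $|\mathrm{aux}\rangle\otimes|\psi_n\rangle$. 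The conceptual skeleton — approximate the unattainable infinite-dimensional limiting strategy by a sequence of finite self-tests and combine their robustness with a Schmidt-rank obstruction — is otherwise routine.
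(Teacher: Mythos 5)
Your reduction of the corollary to Theorem \ref{thm1} is exactly the paper's (the corollary is immediate once $p^*\in\mathcal{C}_{qs}^{3,4,\infty,\infty}$), and your outline of Theorem \ref{thm1} has the same skeleton as the paper's proof: truncations $\ket{\psi_n}$ of a state with polynomially decaying Schmidt coefficients, robustness of the \cite{CGS16} self-test, and a Schmidt-rank obstruction. But the parameter coupling $n=2N$ breaks the argument. Write $T(n):=\sum_{i>n}\lambda_i$ for the tail. Your truncation error satisfies $f(2N)=\Theta(T(2N)^{c})$ for some $c\le 1$ (it is in fact $\Theta(T(2N))$ for the correlation distance, as in Claim \ref{claim1}), while your Schmidt-rank lower bound is $\bigl(\sum_{N<i\le 2N}\lambda_i^{(n)}\bigr)^{1/2}=\Theta(T(N)^{1/2})$; for $\lambda_i\propto i^{-p}$ both $T(2N)$ and $T(N)$ are $\Theta(N^{-(p-1)})$. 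The inequality you need is then $\mathrm{poly}(N)\,(\delta+\Theta(N^{-(p-1)}))^{1/4}\ge\Theta(N^{-(p-1)/2})$, and since $\tfrac14\cdot(p-1)<\tfrac12\cdot(p-1)$, this already holds for \emph{all} $N$ and all $\delta\ge 0$: the quarter-root of the truncation error dominates the square-root of the (same-order) discarded tail. Consequently ``inverting $g$'' yields $\delta\ge\Theta(N^{-k(p-1)/2-O(k)})-\Theta(N^{-(p-1)})$ with $k=4$, where the subtracted term is always the larger one; no choice of $p$ makes $f(2N)$ ``negligible against the first term'', because increasing $p$ rescales both exponents by the same factor. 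The bound is vacuous and gives no lower bound on $N$.

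The fix is to decouple the truncation level from $N$ and tie it to $\delta$, which is what the paper does: choose $n=n(\delta)$ so that $f(n)\lesssim\delta$ (in the paper $n\approx\delta^{-1/16}$ with $\epsilon(n)=O(n^{-16})$). Then the robustness gives $\|\Phi(\ket{\psi})-\ket{\Psi_n}\otimes\ket{\mathrm{extra}}\|\le\mathrm{poly}(n(\delta))\,\delta^{1/4}$, which tends to $0$ as $\delta\to 0$ \emph{provided} the coefficients decay fast enough relative to the $\mathrm{poly}(n)=O(n^3)$ robustness prefactor (this is the actual role of the exponent $8$ in $c_i\propto(i+1)^{-8}$), while the Schmidt-rank obstruction gives a lower bound $\Omega(N^{-8})$ that depends only on $N$, the rank of the actual state, and not on $n$. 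Comparing the two bounds then forces $N=\Omega(\delta^{-1/32})$. A secondary omission: Theorem \ref{thm3} is stated for answer sets of size $d$, so before invoking it you must post-process the answers of $p$ lying outside $\{0,\dots,n-1\}$ (the paper maps them to $0$) and verify this changes $|\cdot|_{corr}$ by only $O(\epsilon(n)+\delta)$.
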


Our second main theorem is the following.
\begin{theorem}
\label{thm2}
There exists a correlation $p^* \in \mathcal{C}_{qs}^{\infty,\infty, 4, 3}$ such that, if $p \in \mathcal{C}_{q \leq N}^{\infty,\infty, 4, 3}$ and $p$ is $\delta$-close to $p^*$ (according to the distance $|\cdot|_{corr}$ defined in Definition \ref{def: distance}), then  $N = \Omega\Big(\frac{1}{\delta^{1/32}}\Big)$.
\end{theorem}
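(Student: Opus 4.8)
The plan is to mirror the strategy behind Theorem~\ref{thm1}, but with the roles of questions and answers exchanged, which is exactly what the abstract promises via a ``novel self-test... using question sets of size $O(d)$, and answer sets of size $4$ and $3$''. So the first step is to establish this new self-testing result: for every local dimension $d$, there is a nonlocal game $G_d$ with $O(d)$ questions per side and answer sets of size $4$ and $3$ respectively, whose ideal strategy uses a specified pure entangled state $\ket{\psi_d}$ of Schmidt rank $d$ (and correspondingly-sized measurements), and which is robustly rigid: any strategy winning with value $1-\epsilon$ is $\mathrm{poly}(\epsilon,d)$-close (in state and measurements, up to local isometry) to the ideal one. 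I would obtain this by taking the self-test of \cite{CGS16} (which uses $3$--$4$ questions and $d$-outcome measurements) and ``transposing'' it: replace each $d$-outcome measurement by a collection of $O(d)$ binary-or-ternary/quaternary measurements that together determine it (e.g.\ a measurement of a $\{1,\dots,d\}$-valued observable can be simulated by $O(d)$ coarse-grained measurements, one per ``bit'' or per comparison, reading off at most $3$ or $4$ outcomes each), and adjust the game's predicate so that consistency across this cluster of new questions forces the original $d$-outcome measurement to reappear. The robustness bound in \cite{CGS16} then propagates, with only polynomial loss in $d$, through the (finitely many, $O(d)$) consistency checks.

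Next, having the robust self-test $G_d$, I would build the target correlation $p^*$ over the countably infinite question set as a ``direct sum'' of the games $G_d$ for all $d$: the question set is $\bigsqcup_d \mathcal{X}_d$ on one side and $\bigsqcup_d \mathcal{Y}_d$ on the other (both countable), answers stay in sets of size $4$ and $3$, and $p^*$ is realized by the infinite-dimensional state $\bigoplus_d \lambda_d \ket{\psi_d}$ (with weights $\lambda_d$ chosen summable, say $\lambda_d \propto 2^{-d}$), where on the block of questions $(\mathcal{X}_d,\mathcal{Y}_d)$ the parties apply the ideal $G_d$-measurements extended by something trivial off the $d$-th block. This exhibits $p^* \in \mathcal{C}_{qs}^{\infty,\infty,4,3}$. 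Then the lower-bound argument: suppose $p \in \mathcal{C}_{q\leq N}^{\infty,\infty,4,3}$ is $\delta$-close to $p^*$. Restricting attention to the $\mathcal{X}_d,\mathcal{Y}_d$ block shows the finite-dimensional strategy (of Schmidt rank $\leq N$) wins $G_d$ with probability $\geq 1 - O(\delta/\lambda_d^2)$ — or more carefully, $\geq 1 - f(\delta)$ where $f$ involves the weight $\lambda_d$ — so by the robustness of the self-test the finite strategy must $\mathrm{poly}(\delta,d)$-approximate $\ket{\psi_d}$, which has Schmidt rank $d$; hence $N \gtrsim d$ unless $\delta$ is bounded below by some explicit function of $d$. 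Optimizing over $d$ (taking $d$ as large as the closeness still permits) yields $N = \Omega(\delta^{-c})$ for an explicit constant $c$; matching the exponent $1/32$ of Theorem~\ref{thm1} is a matter of tracking the polynomial losses, and since the self-test here is ``inspired by \cite{CGS16}'' with only $O(d)$-many extra consistency checks, the same exponent should come out.

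The technical heart — and the step I expect to be the main obstacle — is proving robustness of the transposed self-test $G_d$ with loss only polynomial in $d$ and $\epsilon$. Coarse-graining a $d$-outcome measurement into $O(d)$ few-outcome measurements is easy to define, but showing the \emph{converse} — that approximate consistency among the $O(d)$ clustered questions forces the existence of a genuine $d$-outcome measurement $\epsilon'$-close to the ideal one, with $\epsilon'$ polynomial in $\epsilon$ and $d$ rather than, say, exponential — requires careful operator-norm bookkeeping: errors can compound multiplicatively across the $O(d)$ checks, and one must arrange the checks (e.g.\ in a balanced binary-tree / prefix structure) so that they compound additively instead. A secondary subtlety is handling the infinite direct-sum construction correctly within the formal definition of $\mathcal{C}_{q}^{\infty,\infty,4,3}$ given in Section~\ref{sec: preliminaries} (ensuring the weights decay fast enough that $p^*$ is a legitimate element, and that closeness to $p^*$ really does force closeness on each finite block), but that part is routine once the self-test robustness is in hand. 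Everything else — the direct-sum realization in $\mathcal{C}_{qs}$, the reduction from block-restricted correlations to game values, and the final optimization over $d$ — follows the template of Theorem~\ref{thm1} essentially verbatim.
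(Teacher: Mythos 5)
Your high-level template is right (build a self-test with $O(d)$ questions and $4$/$3$ answers, pass to an infinite limit, and use robustness plus a Schmidt-rank argument), but there are two genuine gaps. First, the entire technical content of the theorem is the new self-test, and you have only gestured at it. The paper does not ``transpose'' the \cite{CGS16} test by a generic bit-decomposition or comparison tree; it builds a specific family of overlapping three- and four-outcome measurements --- tilted-CHSH blocks on $\{\ket{2m},\ket{2m+1}\}$ and on $\{\ket{2m+1},\ket{2m+2}\}$, indexed by $m$, together with auxiliary questions $(m,\mathrm{Aux})$ whose role is precisely the consistency-stitching you worry about: they force $\Pi^2_{A_{(m,Z)}}\ket{\psi}=\Pi^0_{A_{(m+1,Z)}}\ket{\psi}$ (via a Cauchy--Schwarz argument on matching marginals) and force cross-block orthogonality of the reconstructed projections, after which the Yang--Navascu\'es criterion (Lemma \ref{YNcriterion}) applies with only $\mathrm{poly}(d)$ loss. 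You correctly identify this as the main obstacle but do not resolve it, so the proof is incomplete exactly where the work is.

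Second, your infinite construction differs from the paper's and, as specified, cannot give the stated bound. The paper uses a \emph{single} infinite-dimensional state $\ket{\Psi_\infty}=C\sum_i (i+1)^{-8}\ket{ii}$ with the natural infinite extension of the measurement family, so that restricting to the first $N$ blocks loses only $O(N^{-16})$ in correlation distance --- polynomial in $N$. Your direct sum $\bigoplus_d \lambda_d\ket{\psi_d}$ of \emph{independent} games with $\lambda_d\propto 2^{-d}$ means that $\delta$-closeness to $p^*$ only certifies the block-$d$ game up to error $O(\delta/\lambda_d^2)=O(\delta\,4^d)$; for the self-test conclusion to be nontrivial against the smallest Schmidt coefficient of $\ket{\psi_d}$ you can only take $d=O(\log(1/\delta))$, yielding $N=\Omega(\log(1/\delta))$ rather than $\Omega(\delta^{-1/32})$. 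This still proves the qualitative separation of Corollary \ref{cor2}, but not Theorem \ref{thm2} as stated; to recover the polynomial exponent you must use polynomially decaying weights (and then track how the weight enters the robustness), which is exactly what the paper's choice of coefficients $\propto(i+1)^{-8}$ is engineered to do. You would also need to specify the correlation on cross-block question pairs, which your direct-sum formulation leaves open.
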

Again, the theorem can be seen as a dimension witness, and it implies the separation:
\begin{cor}
\label{cor2}
$\mathcal{C}_{q}^{\infty,\infty,4,3} \neq \mathcal{C}_{qs}^{\infty,\infty,4,3} $
\end{cor}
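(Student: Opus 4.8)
The plan is to mirror the structure of the proof of Theorem \ref{thm1}, but swapping the roles of questions and answers. In Theorem \ref{thm1} one takes the self-test of \cite{CGS16} for an arbitrary pure bipartite state $\ket{\psi_d} = \sum_{i=0}^{d-1} \lambda_i \ket{ii}$ of local dimension $d$, which uses $3$ and $4$ questions and $d$ answers, and "glues" these tests together over all $d \in \mathbb{N}$ into a single correlation $p^*$ with $3$ and $4$ questions and countably infinite answer sets: the question $x$ (resp. $y$) now carries an extra label selecting which dimension's test to run, but since the number of \emph{base} questions is already $3$ and $4$ one can absorb the dimension label into the (now infinite) answer. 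The target correlation $p^*$ is the one induced by the infinite-dimensional state $\bigoplus_d \alpha_d \ket{\psi_d}$ (suitably normalized, e.g. $\alpha_d \propto 1/d$), and one shows that a strategy $\delta$-close to $p^*$ must reproduce the dimension-$d$ test to accuracy $\mathrm{poly}(\delta)$ for every $d$, hence (by the robust self-test) must contain a state of Schmidt rank $\geq d$ as a "sub-strategy", forcing $N = \Omega(\delta^{-1/32})$.

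For Theorem \ref{thm2} the one missing ingredient is a self-test, with the questions-answers roles reversed, i.e.\ one with \emph{constantly many answers} ($4$ and $3$) and $O(d)$ questions. So the first and main step is to prove this new self-test: inspired by \cite{CGS16}, I would take their test for $\ket{\psi_d}$ and apply a standard "answer-reduction by question-splitting" transformation. Concretely, a measurement with $d$ outcomes $\{M_a\}_{a=1}^d$ can be simulated by $O(d)$ binary measurements (e.g.\ the $\lceil \log d \rceil$ bits of the outcome, or a one-vs-rest family $\{M_a, \Id - M_a\}$), each asked as a separate question; the self-testing statements (which in \cite{CGS16} are statements about the correlations between specific two-outcome coarse-grainings of the $d$-outcome measurements, such as "is $a \in S$?") transfer directly, since they only ever reference such coarse-grainings. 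One must check that the resulting test still certifies (a) that the two parties' binary observables, appropriately combined, reconstruct the $d$-outcome measurements of the original test, and (b) that the shared state is $\ket{\psi_d}$ up to local isometry, with robustness $\mathrm{poly}(\delta)$ — this is where the bookkeeping lies, but it is routine given that \cite{CGS16} already did the hard work of handling arbitrary $\lambda_i$. The answer-set sizes $4$ and $3$ should come out of being slightly careful about how many outcomes the reduced measurements actually need (a binary observable plus a "null"/abort symbol, say, giving $3$; one side needs one extra outcome giving $4$), matching the bound claimed in the abstract.

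The second step is the gluing construction, which is essentially identical to that of Theorem \ref{thm1}: define $p^* \in \mathcal{C}_{qs}^{\infty,\infty,4,3}$ as the direct sum over $d$ of the new dimension-$d$ tests, with the dimension label now carried by the (infinite) \emph{question} set, run on the infinite-dimensional state $\bigoplus_d \alpha_d \ket{\psi_d}$. The third step is the rigidity argument: given $p \in \mathcal{C}_{q\leq N}^{\infty,\infty,4,3}$ with $|p - p^*|_{corr} \leq \delta$, restrict attention to the block of questions corresponding to dimension $d$; the restricted correlation is $O(\delta)$-close (in the appropriate sense, using that the distance $|\cdot|_{corr}$ controls each conditional-distribution block) to the dimension-$d$ test; by the robust self-test from Step 1, the local state, projected onto the relevant isometry image, has Schmidt rank $\geq d$ up to error $\mathrm{poly}(\delta)$; hence $N \geq d$ whenever $d = O(\delta^{-c})$ for the appropriate exponent $c$, yielding $N = \Omega(\delta^{-1/32})$. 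Corollary \ref{cor2} is then immediate: $p^* \in \mathcal{C}_{qs}^{\infty,\infty,4,3}$ by construction, but $p^* \notin \mathcal{C}_q^{\infty,\infty,4,3}$ since any finite-dimensional strategy has some finite Schmidt rank $N$ and hence (taking $\delta \to 0$) cannot be arbitrarily close to $p^*$, in particular cannot equal it.

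The main obstacle is Step 1: verifying that the answer-reduced version of the \cite{CGS16} self-test remains a \emph{robust} self-test with only a polynomial loss in the robustness parameter, and in particular pinning the answer-set sizes down to exactly $4$ and $3$. Everything else (the infinite direct-sum gluing, the block-restriction argument, extracting the dimension lower bound) is a direct adaptation of the machinery used for Theorem \ref{thm1} and should go through with only notational changes.
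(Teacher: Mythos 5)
Your high-level plan coincides with the paper's: prove a new robust self-test for $\ket{\Psi}=\sum_i c_i\ket{ii}$ with $O(d)$ questions and answer sets of size $4$ and $3$, extend to infinite question sets, and run the same dimension-witness argument as for Theorem \ref{thm1}. The gap is in Step 1, which you correctly identify as the crux but then dismiss as routine. A generic ``answer-reduction by question-splitting'' (one-vs-rest binarizations of Alice's $d$-outcome computational-basis measurement, asked as separate questions) does not automatically yield a self-test: once the projectors $P_A^{(k)}$ come from \emph{different} questions, nothing forces a dishonest strategy to implement them as coarse-grainings of a single projective measurement, so the Yang--Navascu\'es conditions \eqref{condition:YN-criterion-1} and \eqref{condition:YN-criterion-2} (approximate mutual orthogonality and completeness of the $P_A^{(k)}$ on the state) are no longer free --- in \cite{CGS16} they hold trivially because all the $P_A^{(k)}$ belong to one $d$-outcome measurement. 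The paper's construction supplies exactly the missing structure: Alice's question $(m,Z)$ has \emph{four} outcomes because it must simultaneously resolve $\ket{2m},\ket{2m+1},\ket{2m+2}$ so that adjacent tilted-CHSH blocks overlap, and Bob has extra ``Aux'' questions whose correlations with Alice's $Z$-questions (properties (iii) and (iv) of Definition \ref{correlations many questions finite case}) certify both that $\Pi^2_{A_{(m,Z)}}\ket{\psi}=\Pi^0_{A_{(m+1,Z)}}\ket{\psi}$ (needed to chain the flip operators across blocks) and that $P_A^{(k)}P_A^{(k')}\ket{\psi}=0$ for $k,k'$ in different blocks. Your sketch contains no mechanism playing this role, so the chaining in condition \eqref{condition:YN-criterion-4} and the orthogonality in \eqref{condition:YN-criterion-1} would both fail to be certified; the answer-set sizes $4$ and $3$ are forced by this structure, not by bookkeeping.

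A secondary inaccuracy: the paper's separating correlation is not a direct sum over $d$ of dimension-$d$ tests on $\bigoplus_d \alpha_d\ket{\psi_d}$ with the dimension label in the question. It is the ideal correlation of a \emph{single} infinite-dimensional state $\ket{\Psi_\infty}=C\sum_i (i+1)^{-8}\ket{ii}$, realized as the limit of the lifted finite correlations $\hat p^*_N$ for the truncations $\ket{\Psi_N}$. The exponent $8$ is tuned so that the convergence rate $\epsilon(N)=O(N^{-16})$ beats the $N^3$ blow-up in the self-test's robustness, and the contradiction is quantitative: the isometry error $O\bigl(N^3(\epsilon(N)+\delta)^{1/4}\bigr)$ at $N\approx\delta^{-1/16}$ is incompatible with the $\Omega(N'^{-8})$ lower bound forced by the Schmidt-rank mismatch. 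Your ``$N\ge d$ as a sub-strategy'' shortcut skips this trade-off, which is where the $\delta^{-1/32}$ exponent actually comes from.
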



Our proof of Theorem \ref{thm1}, covered in section \ref{infinite answers}, exploits a recent result of  \cite{CGS16}, which shows that any pure bipartite entangled state of qudits can be self-tested, using questions sets of size $3$ and $4$ and answer sets of size $d$. On the other hand, a key step in our proof of Theorem \ref{thm2}, covered in section \ref{infinite questions}, is to show a novel self-test for any bipartite entangled state of qudits, inspired by \cite{CGS16}, using question sets of size $O(d)$ and answer sets of size $4$ and $3$.

One can view our results as ``evidence'' that $\mathcal C_q\neq \mathcal C_{qs}$. On the other hand, one can find results giving evidence in favor of $\mathcal C_q = \mathcal C_{qs}$. For some classes of pseudotelepathy games, e.g.\ linear constraint games \cite{cleve2014characterization} and weak projection games \cite{manvcinska2014maximally}, we know that the ideal strategies must use maximally entangled states (which are inherently finite-dimensional). However, the methods used to prove results like these seem to rely heavily on the game structure. It is plausible that $\mathcal C_q \neq \mathcal C_{qs}$, but the separation is witnessed only by correlations which do not arise from non-local games with a binary (or integer-valued) scoring function.

\section{Preliminaries}
\label{sec: preliminaries}
\paragraph{Strategies, $\mathcal{C}_{q}^{m,n,\infty,\infty}$ and $\mathcal{C}_{q}^{\infty,\infty,r,s}$.} Let $\mathcal{X}, \mathcal{Y}$ be the questions sets, and $\mathcal{A}, \mathcal{B}$ the answer sets. In general, a strategy is specified by Hilbert spaces $\mathcal{H}_A$ and  $\mathcal{H}_B$, a pure state $\ket{\psi} \in \mathcal{H}_A \otimes \mathcal{H}_B$, and projective measurements $\{A^a_x\}_a$ on $\mathcal{H}_A$, $\{B^b_y\}_b$ on $\mathcal{H}_B$, for $x \in \mathcal{X}, y \in \mathcal{Y}$. For short, we refer to a strategy as a triple $\left(\ket{\psi}, \{A^a_x\}_a, \{B^b_y\}_b \right)$. Note that in order to concisely describe a strategy, we will sometimes simply specify the observables, which in turn determine the projective measurements. Note that a projective measurement can have countably infinite outcomes, and this simply means that it specifies a countably infinite set of eigenspaces, which of course requires the underlying Hilbert space to be infinite-dimensional. Nonetheless, we can still talk about finite-dimensional quantum correlations with countably infinite answer sets by adding the requirement that the joint state has finite Schmidt rank, even though the Hilbert space may be infinite-dimensional. This is how we define $\mathcal{C}_{q}^{m,n,\infty,\infty}$ and $\mathcal{C}_{q}^{\infty,\infty, r, s}$:

\begin{definition}($\mathcal{C}_{q}^{m,n,\infty,\infty}$ and $\mathcal{C}_{q}^{\infty,\infty,r,s}$)
Let $\mathcal{X}$, $\mathcal{Y}$ be question sets of size $m$ and $n$, and $\mathcal{A}$, $\mathcal{B}$ answer sets of countably infinite size. A correlation $\{p(a,b|x,y): (a,b) \in \mathcal{A} \times \mathcal{B}\}_{(x,y) \in \mathcal{X} \times \mathcal{Y}}$ is in $\mathcal{C}_{q}^{m,n,\infty,\infty}$ if there exist Hilbert spaces $\mathcal{H}_A$, $\mathcal{H}_B$, a strategy $\left( \ket{\psi}, \{A^a_x\}_a, \{B^b_y\}_b \right)$ on $\mathcal{H}_A \otimes \mathcal{H}_B$, where $\ket{\psi}$ is of finite Schmidt rank, and $\forall a,b,x,y$, $$ p(a,b|x,y) = \bra{\psi}A^a_x \otimes B^b_y \ket{\psi}
$$ The definition of $\mathcal{C}_{q}^{\infty,\infty,r,s}$ is analogous, except that $\mathcal{X}$, $\mathcal{Y}$ have countably infinite size, while $\mathcal{A}$, $\mathcal{B}$ have sizes $r$ and $s$.
\label{def: sets}
\end{definition}
 
Note that in the above definition $\mathcal{H}_A$ and $\mathcal{H}_B$ are allowed to be infinite-dimensional, but we require $\ket{\psi}$ to have finite Schmidt rank (i.e. finite entanglement). $\mathcal{C}_{qs}^{m,n,\infty,\infty}$ and $\mathcal{C}_{qs}^{\infty,\infty,r,s}$ are defined by simply dropping the requirement that $\ket{\psi}$ has finite Schmidt rank. We choose to work only with projective measurements for later convenience, but one could alternatively define $\mathcal{C}_{q}^{m,n,\infty,\infty}$ and $\mathcal{C}_{q}^{\infty,\infty,r,s}$ by restricting to finite-dimensional Hilbert spaces, and allowing the measurements to be infinite-outcome POVMs (Positive-Operator-Valued Measures). We show that these two definitions are equivalent. When it is clear from the context, we omit writing trivial identities on other subsystems: for example, we may write $A_x^a$ in place of $A_x^a \otimes \Id$.

\begin{lem}
The following are equivalent:
\begin{itemize}
\item[(i)] $p \in \mathcal{C}_{q}^{m,n,\infty,\infty}$ (according to Definition \ref{def: sets})
\item[(ii)] There exist finite-dimensional Hilbert spaces $\mathcal{H}_A$ and $\mathcal{H}_B$, a state $\ket{\psi} \in \mathcal{H}_A \otimes \mathcal{H}_B$ and POVMs $\{M^a_x\}_a$, $\{N^b_y\}_b$ on $\mathcal{H}_A$ and $\mathcal{H}_B$ respectively, such that $p(a,b|x,y) = \Tr{M^a_x N^b_y \ket{\psi}\bra{\psi}}$. 
\end{itemize}
\end{lem}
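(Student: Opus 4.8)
The plan is to prove the two implications separately. The direction $(ii) \Rightarrow (i)$ is the easier one: given finite-dimensional $\mathcal{H}_A$, $\mathcal{H}_B$, a state $\ket{\psi}$, and POVMs $\{M^a_x\}_a$, $\{N^b_y\}_b$, I would apply Naimark dilation to each measurement family. Concretely, for each $x$ the POVM $\{M^a_x\}_a$ on $\mathcal{H}_A$ dilates to a projective measurement $\{A^a_x\}_a$ on $\mathcal{H}_A \otimes \mathcal{K}_A$ (for some auxiliary, possibly infinite-dimensional if the number of outcomes is infinite, Hilbert space $\mathcal{K}_A$), and similarly on the $B$ side. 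The subtlety is that a single shared ancilla must work simultaneously for all questions $x$; the standard trick is to take $\mathcal{K}_A = \ell^2(\mathcal{A})$ and define, for a fixed reference state $\ket{0} \in \mathcal{K}_A$, the isometry $V_x\colon \mathcal{H}_A \to \mathcal{H}_A \otimes \mathcal{K}_A$ with $V_x \ket{\phi} = \sum_a (\sqrt{M^a_x}\ket{\phi}) \otimes \ket{a}$, then conjugate the projections $\Id \otimes \ketbra{a}{a}$ by a unitary extending $V_x$ — doing this honestly requires only that we enlarge $\mathcal{H}_A$ once and fix unitaries $U_x$ per question. The resulting state $\ket{\psi'} = \ket{\psi} \otimes \ket{0}_{\mathcal{K}_A} \otimes \ket{0}_{\mathcal{K}_B}$ has the same (finite) Schmidt rank as $\ket{\psi}$, since we only tensored on product ancillas, so $p \in \mathcal{C}_q^{m,n,\infty,\infty}$ by Definition \ref{def: sets}.

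For the direction $(i) \Rightarrow (ii)$, start from a strategy $(\ket{\psi}, \{A^a_x\}_a, \{B^b_y\}_b)$ on possibly-infinite-dimensional $\mathcal{H}_A \otimes \mathcal{H}_B$ with $\ket{\psi}$ of finite Schmidt rank $N$. Write the Schmidt decomposition $\ket{\psi} = \sum_{i=1}^N \lambda_i \ket{u_i}_A \ket{v_i}_B$ and let $\mathcal{H}_A' = \mathrm{span}\{\ket{u_i}\}$, $\mathcal{H}_B' = \mathrm{span}\{\ket{v_i}\}$, both of dimension $N$. Let $P_A$, $P_B$ be the orthogonal projections onto $\mathcal{H}_A'$, $\mathcal{H}_B'$. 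Define $M^a_x = P_A A^a_x P_A$, viewed as an operator on $\mathcal{H}_A'$, and similarly $N^b_y = P_B B^b_y P_B$ on $\mathcal{H}_B'$. Each $M^a_x$ is positive (compression of a projection), and for fixed $x$, $\sum_a M^a_x = P_A(\sum_a A^a_x)P_A = P_A \Id P_A = \Id_{\mathcal{H}_A'}$, so $\{M^a_x\}_a$ is a genuine POVM on the finite-dimensional space $\mathcal{H}_A'$; likewise for $\{N^b_y\}_b$. Since $\ket{\psi} \in \mathcal{H}_A' \otimes \mathcal{H}_B'$, we have $P_A \otimes P_B$ acting as identity on $\ket{\psi}$, hence
\begin{equation}
\bra{\psi} A^a_x \otimes B^b_y \ket{\psi} = \bra{\psi} (P_A A^a_x P_A) \otimes (P_B B^b_y P_B) \ket{\psi} = \Tr{M^a_x N^b_y \ketbra{\psi}{\psi}},
\end{equation}
which gives exactly condition $(ii)$. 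The same argument applies verbatim when $m$ and $n$ are replaced by $\infty$, establishing the analogous equivalence for $\mathcal{C}_q^{\infty,\infty,r,s}$.

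The main obstacle is purely bookkeeping in the $(ii) \Rightarrow (i)$ direction: making the Naimark dilation simultaneous across all questions while keeping everything a tensor product with respect to the $A/B$ cut, so that the Schmidt rank is genuinely preserved. One must be careful that the ancillas $\mathcal{K}_A$, $\mathcal{K}_B$ are attached to the correct sides and that the post-dilation measurements are $\{U_x^\dagger (\Id \otimes \ketbra{a}{a}) U_x\}_a$ on $\mathcal{H}_A \otimes \mathcal{K}_A$ — i.e. they act trivially on $\mathcal{H}_B \otimes \mathcal{K}_B$ — which is what guarantees the product structure survives. The $(i) \Rightarrow (ii)$ direction has no real obstacle beyond observing that compression to the Schmidt support is lossless for expectations against $\ket{\psi}$.
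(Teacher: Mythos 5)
Your proof is correct and follows essentially the same route as the paper: Naimark dilation (with the state tensored with product ancillas, so Schmidt rank is preserved) for (ii) $\Rightarrow$ (i), and compression of the projective measurements onto the finite-dimensional Schmidt support of $\ket{\psi}$ for (i) $\Rightarrow$ (ii) — the paper phrases the latter as a partial trace against $\ket{0}\bra{0}$ after an explicit isomorphism, which is the same construction as your $P_A A^a_x P_A$. Your writeup is in fact somewhat more explicit than the paper's about why the dilation can be done with a single ancilla per party and why the Schmidt rank survives.
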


\begin{proof}
(ii) $\Rightarrow$ (i): Apply Naimark's dilation theorem \cite{Pau03} (note that it holds also for infinite-outcome POVMs).

(i) $\Rightarrow$ (ii): Let $\mathcal{H}_A$, $\mathcal{H}_B$ be (infinite-dimensional) Hilbert spaces, and $\ket{\psi} \in \mathcal{H}_A \otimes \mathcal{H}_B$ a bipartite state with finite Schmidt rank. Let $\{A^a_x\}_a$ and $\{B^b_y\}_b$ be infinite-outcome projective measurements such that $p(a,b|x,y) = \bra{\psi}A^a_x B^b_y\ket{\psi}$. Let $\mathcal{H}_{A'}$, $\mathcal{H}_{B'}$ be finite-dimensional Hilbert spaces with dimension the Schmidt rank of $\ket{\psi}$, and $\mathcal{H}_{\hat{A}}$, $\mathcal{H}_{\hat{B}}$ infinite-dimensional Hilbert spaces. Let $\ket{\psi'} \in \mathcal{H}_{A'} \otimes \mathcal{H}_{B'}$ be a state with the same Schmidt decomposition as $\ket{\psi}$, with respect to some basis of $\mathcal{H}_{A'} \otimes \mathcal{H}_{B'}$. Define isomorphisms $\Phi_D: \mathcal{H}_D \rightarrow \mathcal{H}_{D'} \otimes \mathcal{H}_{\hat{D}}$, for $D \in \{A,B\}$, such that $\Phi_A \otimes \Phi_B (\ket{\psi}) = \ket{\psi'}_{A'B'} \otimes \ket{00}_{\hat{A}\hat{B}}$, and the new projective measurements under the isomorphism $\{\tilde{A}^a_x\}_a$ and $\{\tilde{B}^b_y\}_b$. From these, we wish to obtain POVMs on just $\mathcal{H}_A'$ and $\mathcal{H}_B'$ such that, on $\ket{\psi'}$, they reproduce the correlation $p$. 

By hypothesis, $p(a,b|x,y) = \Tr{\ket{\psi'}\bra{\psi'} \otimes \ket{00}\bra{00} \tilde{A}^a_x \tilde{B}^b_y}$. Define $M^a_x := \pTr{\hat{A}}{I \otimes \ket{0}\bra{0}_{\hat{A}} \tilde{A}^a_x}$, and $N^b_y := \pTr{\hat{B}}{I \otimes \ket{0}\bra{0}_{\hat{B}} \tilde{B}^b_y}$. Then $p(a,b|x,y) = \Tr{\ket{\psi'}\bra{\psi'} M^a_x N^b_y }$. Moreover, one can check that $\{M^a_x\}_a$ and $\{N^b_y\}_b$ are POVMs on $\mathcal{H}_{A'}$ and $\mathcal{H}_{B'}$, as desired.
\end{proof}

\paragraph{Distance between correlations} We make precise the notion of distance between correlations.
\begin{definition}(Distance between correlations)
\label{def: distance}
Let $\{p(a,b|x,y): (a,b) \in \mathcal{A}\times \mathcal{B}\}_{(x,y) \in \mathcal{X}\times \mathcal{Y}}$ and $\{p'(a,b|x,y): (a,b) \in \mathcal{A}\times \mathcal{B}\}_{(x,y) \in \mathcal{X}\times \mathcal{Y}}$ be correlations on the same question and answer sets $\mathcal{X}, \mathcal{Y}, \mathcal{A}, \mathcal{B}$. Define their distance $|\cdot|_{corr}$ as 
\begin{equation}
|p-p'|_{corr} :=\sup_{x,y}   \sum_{a,b}|p(a,b|x,y)-p'(a,b|x,y)|
\end{equation}
\end{definition}

\paragraph{Self-testing} We define self-testing formally:
\begin{definition}[Self-testing]
We say that a correlation $\{p^*(a,b|x,y): a \in \mathcal{A}, b \in \mathcal{B}\}_{x \in \mathcal{X}, y \in \mathcal{Y}}$ self-tests a strategy $\left(\ket{\Psi}, \{\tilde{A}_{x}^{a}\}_{a}, \{\tilde{B}_{y}^{b}\}_{b} \right)$, with robustness $\delta(\epsilon)$, where $\delta(\epsilon) \rightarrow 0$, as $\epsilon \rightarrow 0$, if for any strategy $\left(\ket{\psi}, \{A^a_x\}_a, \{B^b_y\}_b \right)$ reproducing a correlation $p$ such that $|p-p^*|_{corr} \leq \epsilon$, there exists a local isometry $\Phi= \Phi_A\otimes\Phi_B$ such that 
\begin{align}
\| \Phi(\ket{\psi}) - \ket{\mathrm{extra}}\otimes \ket{\Psi} \| &\leq \delta(\epsilon) \label{eq: state}\\ \label{eq: measurements}
\| \Phi(A_{x}^{a}\otimes B_{y}^{b}\ket{\psi}) - \ket{\mathrm{extra}}\otimes (\tilde{A}_{x}^{a}\otimes \tilde{B}_{y}^{b} \ket{\Psi}) \| &\leq \delta(\epsilon), 
\end{align}
where $\ket{\mathrm{extra}}$ is some auxiliary state.
\end{definition}
Sometimes, we refer to \textit{self-testing of the state} when we are only concerned with the guarantee of equation \eqref{eq: state}, and not \eqref{eq: measurements}.

\paragraph{Tilted CHSH}

We briefly introduce the tilted CHSH inequality \cite{Acin12}, which is a building block for all of the correlations appearing in this work. Let $A_0, A_1, B_0, B_1$ be $\pm 1$-valued random variables. For a random variable $X$, let $\left<X\right>$ denote its expectation. The tilted CHSH inequality \cite{Acin12} is the following generalisation of the CHSH inequality:
\begin{equation}
    \left<\alpha A_0 + A_0B_0 + A_0B_1 +A_1B_0 - A_1B_1 \right > \leq 2+\alpha,
    \label{tiltedchsh}
\end{equation}
which holds when the random variables are local. The maximal quantum violation is $\sqrt{8+2\alpha^2}$ and is attained when the strategy of the two parties consists of sharing the joint state $\ket{\psi} = \cos \theta \ket{00} + \sin \theta \ket{11}$, and measuring observables $A_0, A_1$ and $B_0, B_1$ respectively, where $A_0 = \sigma_z$, $A_1 = \sigma_x$, $B_0 = \cos \mu \sigma_z + \sin \mu \sigma_x$ and $B_1 = \cos \mu \sigma_z + \sin \mu \sigma_x$, and $\sin 2\theta = \sqrt{\frac{4-\alpha^2}{4+\alpha^2}}$ and $\mu = \arctan \sin 2\theta$. The converse also holds, in the sense that maximal violation self-tests this strategy. This is made precise in the following lemma.

\begin{lem}[\cite{Bamps15}]
\label{Bamps lemma}
Let $\ket{\psi} \in \mathcal{H}_A \otimes \mathcal{H}_B$. Let $A_0, A_1$ and $B_0, B_1$ be binary observables, respectively on $\mathcal{H}_A$ and $\mathcal{H}_B$, with $\pm1$ eigenvalues. Suppose that
\begin{equation}
\bra{\psi}\alpha A_0 + A_0B_0 + A_0B_1 +A_1B_0 - A_1B_1 \ket{\psi} \geq \sqrt{8+\alpha^2} - \epsilon
\end{equation}
Let $\theta, \mu \in (0,\frac{\pi}{2})$ be such that $\sin 2\theta = \sqrt{\frac{4-\alpha^2}{4+\alpha^2}}$ and $\mu = \arctan \sin 2\theta$. Let $Z_A = A_0$, $X_A = A_1$. Let $Z^*_B$ and $X^*_B$ be respectively $\frac{B_0 + B_1}{2\cos \mu}$ and $\frac{B_0 - B_1}{2\sin \mu}$, but with all zero eigenvalues replaced by one. Define $Z_B = Z^*_B|Z^*_B|^{-1}$ and $X_B = X^*_B|X^*_B|^{-1}$. \\
Then, we have
\begin{align}
\|(Z_A - Z_B) \ps \| = O(\sqrt{\epsilon})& \\
\|\cos\theta X_A (\Id-Z_A) \ps - \sin\theta X_B (\Id+Z_B) \ps\| &= O(\sqrt{\epsilon})
\end{align}
\end{lem}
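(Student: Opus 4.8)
The plan is to prove the lemma by the sum-of-squares (SOS) method. Write the tilted CHSH Bell operator as
\[
\mathcal{B}_\alpha \;:=\; \alpha A_0 + A_0 B_0 + A_0 B_1 + A_1 B_0 - A_1 B_1 ,
\]
and let $\beta_{\max} = \sqrt{8+2\alpha^2}$ be the maximal quantum value recalled above. The aim is to exhibit an explicit operator identity
\[
\beta_{\max}\,\Id - \mathcal{B}_\alpha \;=\; \sum_i \lambda_i\, P_i^\dagger P_i , \qquad \lambda_i > 0 ,
\]
in which each $P_i$ is a low-degree polynomial in $A_0, A_1, B_0, B_1$, chosen so that the operator equations $P_i \ket{\psi} = 0$ are exactly (rescalings of) the two relations in the conclusion. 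Given such an identity, the hypothesis $\bra{\psi} \mathcal{B}_\alpha \ket{\psi} \geq \beta_{\max} - \epsilon$ immediately yields $\sum_i \lambda_i \|P_i \ket{\psi}\|^2 \leq \epsilon$, hence $\|P_i \ket{\psi}\| = O(\sqrt\epsilon)$ for every $i$; this is where the $\sqrt\epsilon$ in the statement comes from.

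To guess the right $P_i$, I would read them off the ideal strategy. With $\tilde Z_B := \tfrac{B_0 + B_1}{2\cos\mu}$ and $\tilde X_B := \tfrac{B_0 - B_1}{2\sin\mu}$, a short computation on the qubit strategy $\ket{\psi} = \cos\theta\ket{00} + \sin\theta\ket{11}$, $A_0 = \sigma_z$, $A_1 = \sigma_x$, $B_{0/1} = \cos\mu\,\sigma_z \pm \sin\mu\,\sigma_x$ shows that $(A_0 - \tilde Z_B)\ket{\psi} = 0$ and $\cos\theta\,A_1(\Id - A_0)\ket{\psi} = \sin\theta\,\tilde X_B(\Id + \tilde Z_B)\ket{\psi}$. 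This suggests taking, up to positive scalars, $P_1 \propto A_0 - \tilde Z_B$ and $P_2 \propto \cos\theta\,A_1(\Id - A_0) - \sin\theta\,\tilde X_B(\Id + \tilde Z_B)$, with the decomposition completed by additional square terms (and the choice of the $\lambda_i$) that absorb the non-Hermitian cross terms $A_0 A_1 - A_1 A_0$, the anticommutators $\{B_0, B_1\}$, and the linear term $\alpha A_0$. Verifying that a suitable positive combination of these squares collapses --- using only $A_i^2 = B_j^2 = \Id$ and $[A_i, B_j] = 0$ --- to exactly $\beta_{\max}\,\Id - \mathcal{B}_\alpha$, with no positive slack left over, is the computational heart of the proof: this is the content of the Bamps--Pironio SOS, and pinning down the coefficients (explicit functions of $\mu$, equivalently of $\theta$ and $\alpha$) so that nothing survives is the step I expect to be the main obstacle.

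Granting the SOS, I conclude $\|(A_0 - \tilde Z_B)\ket{\psi}\| = O(\sqrt\epsilon)$ and $\|\cos\theta\,A_1(\Id - A_0)\ket{\psi} - \sin\theta\,\tilde X_B(\Id + \tilde Z_B)\ket{\psi}\| = O(\sqrt\epsilon)$, and then upgrade the non-unitary $\tilde Z_B, \tilde X_B$ to the genuine $\pm1$ observables $Z_B = \tilde Z_B^{*}\,|\tilde Z_B^{*}|^{-1}$, $X_B = \tilde X_B^{*}\,|\tilde X_B^{*}|^{-1}$ of the statement. The key point is that, although $\tilde Z_B$ only has norm $\leq 1/\cos\mu$, the bound $\|(A_0 - \tilde Z_B)\ket{\psi}\| = O(\sqrt\epsilon)$ together with $A_0^2 = \Id$ forces $\|(\tilde Z_B^2 - \Id)\ket{\psi}\| = O(\sqrt\epsilon)$, so the spectral projector of $\tilde Z_B$ onto eigenvalues bounded away from $\{-1,1\}$ (and onto $0$) carries only $O(\epsilon)$ of the weight of $\ket{\psi}$; replacing $\tilde Z_B$ by $Z_B$, which alters $\tilde Z_B$ only on that subspace, therefore perturbs its action on $\ket{\psi}$ by $O(\sqrt\epsilon)$. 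The same applies to $\tilde X_B \to X_B$, using that $\|\tilde X_B(\Id + \tilde Z_B)\ket{\psi}\|$ equals $\|A_1(\Id - A_0)\ket{\psi}\|$ up to $O(\sqrt\epsilon)$ and is hence bounded, so the ill-conditioned part of $\tilde X_B$ again overlaps $\ket{\psi}$ only by $O(\sqrt\epsilon)$. Substituting these approximations into the two displayed estimates, and using $Z_A = A_0$, $X_A = A_1$, gives $\|(Z_A - Z_B)\ket{\psi}\| = O(\sqrt\epsilon)$ and $\|\cos\theta\,X_A(\Id - Z_A)\ket{\psi} - \sin\theta\,X_B(\Id + Z_B)\ket{\psi}\| = O(\sqrt\epsilon)$, as claimed. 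Apart from the SOS identity, the only other place requiring real care is this quantitative control of the regularization on the small-eigenvalue subspaces; the remainder is triangle-inequality bookkeeping.
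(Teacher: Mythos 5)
The paper does not actually prove this lemma---it is imported directly from \cite{Bamps15}---and your sketch is precisely the Bamps--Pironio argument: an explicit SOS decomposition of $\sqrt{8+2\alpha^2}\,\Id - \mathcal{B}_\alpha$ whose constituent polynomials vanish on the ideal strategy, giving $O(\sqrt{\epsilon})$ bounds from near-maximal violation, followed by quantitative control of the regularizations $\tilde Z_B \mapsto Z_B$ and $\tilde X_B \mapsto X_B$ on the small-eigenvalue subspaces. This is essentially the same approach as the cited source; the only caveats are that you defer the explicit SOS coefficients (which are the substance of \cite{Bamps15}) and that you correctly use the maximal value $\sqrt{8+2\alpha^2}$ where the lemma statement has a typo ($\sqrt{8+\alpha^2}$).
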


\section{Finite question sets and infinite answer sets}
\label{infinite answers}
We start by describing the bipartite quantum correlations that self-test any entangled pair of qudits, from \cite{CGS16}. We will then naturally extend these correlations to infinite answer sets (with the same question sets). The self-testing result for all finite-dimensional bipartite states from \cite{CGS16} will be a key ingredient in our proof that the new correlations can be achieved using an infinite-dimensional state, but not any finite-dimensional one. For the purpose of our proof, we will require a robust version of the result from \cite{CGS16}, of which we provide a proof in the Appendix.

\subsection{Correlations that self-test any entangled pair of qudits with questions sets of size $3$ and $4$, and answer sets of size $d$}
\label{2.1}

In this subsection, we present the correlations from \cite{CGS16} that self-test any entangled pair of qudits, for any finite $d$. The question sets are $\mathcal{X} = \{0,1,2\}$ and $\mathcal{Y} = \{0,1,2,3\}$, and the answer sets are $\mathcal{A}  = \mathcal{B} = \{0,1,\ldots ,d-1\}$ for Alice and Bob respectively. We start by describing ideal measurements that achieve the self-testing correlations, as we believe this aids understanding. Then, in Definition \ref{correlations many answers}, we describe properties of the self-testing correlations that are enough to characterize them, in the sense that any correlation satisfying these properties must be the self-testing correlation. Let $\sigma_Z$ and $\sigma_X$ be the usual Pauli matrices. For a single-qubit observable $A$, we denote by $[A]_m$ the observable defined with respect to the basis $\{\ket{2m},\ket{2m+1}\}$. For example, $[\sigma_Z]_m = \ket{2m}\bra{2m} - \ket{2m+1}\bra{2m+1}$. Similarly, we denote by $[A]'_m$ the observable defined with respect to the basis $\{\ket{2m+1}, \ket{2m+2}\}$. We use the notation $\bigoplus A_i$ to denote the direct sum of observables $A_i$.  We take $d$ to be odd, as this is the more relevant case to us. The case $d$ even is similar (and simpler).

\begin{definition}[Ideal measurements---many answers, finite case]
\label{ideal measurements - many answers}
For $d$ odd, let $\ket{\Psi} = \sum_{i=0}^{d-1} c_i \ket{ii}$ with $\sum_{i=0}^{d-1} c_i^2 = 1$. 

\begin{itemize}

\item For $x=0$: Alice measures in the computational basis (i.e. in the basis $\{\ket{0},\ket{1},\cdots,\ket{d-1}\}$). For $x=1$ and $x=2$, she measures in the eigenbases of observables 
\begin{align*}
&{\bigoplus_{m=0}^{\frac{d-1}{2}-1} [\sigma_X]_m\oplus \ket{d-1}\bra{d-1}} 
\text{ and } \\
&{\ket{0}\bra{0}\oplus\bigoplus_{m=0}^{\frac{d-1}{2}-1} [\sigma_X]'_m},
\end{align*}respectively, with the natural assignments of $d$ measurement outcomes.

\item In a similar way, for $y=0$ and $y=1$, Bob measures in the eigenbases of
\begin{align*}
&{\bigoplus_{m=0}^{\frac{d-1}{2}-1} [\cos{(\mu_m)}\sigma_Z+\sin{(\mu_m)}\sigma_X]_m\oplus\ket{d-1}\bra{d-1}} 
\text{ and }\\
&{\bigoplus_{m=0}^{\frac{d-1}{2}-1} [\cos{(\mu_m)}\sigma_Z-\sin{(\mu_m)}\sigma_X]_m\oplus\ket{d-1}\bra{d-1}},
\end{align*}
respectively, where $\mu_m = \arctan(\sin(2\theta_m))$ and $\theta_m = \arctan(\frac{c_{2m+1}}{c_{2m}})$. For $y=2$ and $y=3$, he measures in the eigenbases of
\begin{align*}
&{\ket{0}\bra{0}\oplus\bigoplus_{m=0}^{\frac{d-1}{2}-1} [\cos{(\mu'_m)}\sigma_Z+\sin{(\mu'_m)}\sigma_X]'_m}
\text{ and }\\
&{\ket{0}\bra{0}\oplus\bigoplus_{m=0}^{\frac{d-1}{2}-1} [\cos{(\mu'_m)}\sigma_Z-\sin{(\mu'_m)}\sigma_X]'_m},
\end{align*}
respectively, where $\mu'_m = \arctan(\sin(2\theta'_m))$ and $\theta'_m = \arctan(\frac{c_{2m+2}}{c_{2m+1}})$.

\end{itemize}
\end{definition}

The ideal measurements of Definition \ref{ideal measurements - many answers} define a correlation, which we refer to as the \textit{ideal correlation}. We now extract the essential properties of this correlation. The self-testing result from \cite{CGS16} then states that any correlation satisyfing these properties self-tests state $\ket{\Psi} = \sum_{i=0}^{d-1} c_i \ket{ii}$, as well as the ideal measurements, which implies that these properties are satisfied exclusively by the ideal correlation.

A convenient way to describe correlations is through correlation tables. A correlation can be specified by describing tables $T_{xy}$ for each possible question $(x,y) \in \mathcal{X} \times \mathcal{Y}$, with entries $T_{xy}(a,b) = p(a,b|x,y)$ for $(a,b)\in \mathcal{A} \times \mathcal{B}$. Let $\{T^{\text{tilted}}_{xy;\theta_m}\}_{x,y \in \{0,1\}}$ be the $2 \times 2$ correlation tables containing ideal tilted CHSH correlations self-testing the state $\cos{(\theta_m)}\ket{00}+\sin{(\theta_m)}\ket{11}$.

\begin{definition}[Self-testing properties of the ideal correlations---many answers, finite case]
\label{correlations many answers}

Let $\ket{\Psi} = \sum_{i=0}^{d-1} c_i \ket{ii}$, with $\sum_i c_i^2 =1 $. Take $d$ to be odd (this is the more relevant case for us, and the case $d$ even is similar). The self-testing properties of the ideal correlation for $\ket{\Psi}$ are: 
\begin{enumerate}[(i)]
\item For $x,y \in \{0,1\}$, $T_{xy}$ is block-diagonal with $2 \times 2$ blocks $C_{x,y,m}$ given by $(c_{2m}^2+c_{2m+1}^2)\cdot T^{\text{tilted}}_{xy;\theta_m}$, where $\theta_m :=\arctan \left(\frac{c_{2m+1}}{c_{2m}}\right) \in (0,\frac{\pi}{2})$. See table \ref{tab:txy1}.
\item For $x \in \{0,2\}$ and $y \in \{2,3\}$, $T_{xy}$ is also block-diagonal, but with blocks ``shifted down'' by one measurement outcome. Let the $2\times 2$ blocks be $D_{x,y,m}$ (corresponding to outcomes $2m+1$ and $2m+2$) for $x \in \{0,2\}$ and $y \in \{2,3\}$, defined as $D_{x,y,m}:= (c_{2m+1}^2+c_{2m+2}^2) \cdot T^{\text{tilted}}_{f(x),g(y); \theta_m'}$, where $\theta_m' :=\arctan\left(\frac{c_{2m+2}}{c_{2m+1}}\right) \in (0,\frac{\pi}{2})$, and $f(0) = 0, f(2) = 1, g(2) = 0, g(3) = 1$. See table \ref{tab:txy2}.
\end{enumerate}

\begin{table}[H]
\caption{$T_{xy}$ for $x,y\in \{0,1\}$, for $d$ odd}
\label{tab:txy1}
\begin{center}
\begin{tabular}{| c || c | c | c | c | c | c | c | c |}
	\hline
	$a \backslash b$ & 0 & 1 & 2 & 3 & $\cdots$ & $d-3$ & $d-2$ & d-1 \\ \hline \hline
	0 & \multicolumn{2}{| c |}{\multirow{2}{*}{$C_{x,y,m=0}$}} & 0 & 0 & $\cdots$ & 0 & 0 & 0\\ 
	\hhline{*{1}{|-}*{2}{|~}*{6}{|-}}
	1 & \multicolumn{2}{| c |}{} & 0 & 0 & $\cdots$ & 0 & 0 & 0\\ \hline
	2 & 0 & 0 & \multicolumn{2}{| c |}{\multirow{2}{*}{$C_{x,y,m=1}$}} & $\cdots$ & 0 & 0 & 0\\
	\hhline{*{3}{|-}*{2}{|~}*{4}{|-}}
	3 & 0 & 0 & \multicolumn{2}{| c |}{} & $\cdots$ & 0 & 0 & 0\\ \hline 	$\vdots$ & $\vdots$ & $\vdots$ & $\vdots$ & $\vdots$ & $\ddots$ & $\vdots$ & $\vdots$  & \vdots \\ \hline
	$d-3$ & 0 & 0 & 0 & 0 & $\cdots$ &  \multicolumn{2}{| c |}{\multirow{2}{*}{$C_{x,y,m=\frac{d-3}{2}}$}} & 0\\
	\hhline{*{6}{|-}*{2}{|~}*{1}{|-}}
	$d-2$ & 0 & 0 & 0 & 0 & $\cdots$ & \multicolumn{2}{| c |}{} & 0\\ \hline
	$d-1$ & 0 & 0 & 0 & 0 & $\cdots$ & 0 & 0 & $c_{d-1}^2$\\
	\hline
\end{tabular}
\end{center}
\end{table}
\begin{table}[H]
\caption{$T_{xy}$ for $x \in \{0,2\}, y \in \{2,3\}$}
\label{tab:txy2}
\begin{center}
\begin{tabular}{| c || c | c | c | c | c | c | c | c | c | c |}
	\hline
	$a \backslash b$ & 0 & 1 & 2 & 3 & 4 & $\cdots$ & $d-2$ & $d-1$\\ \hline \hline
    0 & $c_0^2$ & 0 & 0 & 0  & 0 & $\cdots$ & 0 & 0 \\ \hline
	1 & 0 & \multicolumn{2}{| c |}{\multirow{2}{*}{$D_{x,y,m=0}$}} & 0 & 0 & $\cdots$ & 0 & 0 \\
	\hhline{*{2}{|-}*{2}{|~}*{5}{|-}}
	2 & 0 & \multicolumn{2}{| c |}{} & 0 & 0 & $\cdots$ & 0 & 0\\ \hline
	3 & 0 & 0 & 0 & \multicolumn{2}{| c |}{\multirow{2}{*}{$D_{x,y,m=1}$}} & $\cdots$ & 0 & 0\\
	\hhline{*{4}{|-}*{2}{|~}*{3}{|-}}
	4 & 0 & 0 & 0 & \multicolumn{2}{| c |}{} & $\cdots$ & 0 & 0 \\ \hline
	$\vdots$ & $\vdots$ & $\vdots$ & $\vdots$ & $\vdots$ & $\vdots$ & $\ddots$  & $\vdots$ & $\vdots$ \\ \hline
    $d-2$ & 0 & 0 & 0 & 0 & 0 & $\cdots$ & \multicolumn{2}{| c |}{\multirow{2}{*}{$D_{x,y,m=0}$}} \\
	\hhline{*{7}{|-}*{2}{|~}}
    $d-1$ & 0 & 0 & 0 & 0 & 0 & $\cdots$ &\multicolumn{2}{| c |}{} \\ \hline
\end{tabular}
\end{center}
\end{table}

We refer the reader to \cite{CGS16} for an explicit presentation of the $2 \times 2$ blocks $C_{x,y,m}$.
\end{definition}

The following is a robust version of the self-testing result from \cite{CGS16}.

\begin{theorem} (\cite{CGS16})
\label{thm3}
For any bipartite entangled quantum state $\ket{\Psi} = \sum_{i=0}^{d-1} c_i \ket{ii}$, there exists a correlation $p^* \in \mathcal{C}_q^{3,4,d,d}$ (the one specified in Definition \ref{ideal measurements - many answers}) that self-tests $\ket{\Psi}$, with $O\left(d^3\epsilon^{\frac14}\right)$ robustness.
\end{theorem}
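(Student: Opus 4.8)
The plan is to reduce the self-testing of the full qudit state $\ket{\Psi}=\sum_i c_i\ket{ii}$ to a collection of tilted-CHSH self-tests glued together by the block structure of the correlation tables, following the architecture of \cite{CGS16} but tracking the $\epsilon$-dependence carefully throughout. First I would observe that the properties in Definition \ref{correlations many answers} imply that any strategy $\delta$-close (in $|\cdot|_{corr}$) to $p^*$ must, for each block index $m$, produce statistics on the relevant $2\times 2$ submatrices that are $O(\delta)$-close to a tilted-CHSH optimal correlation for the angle $\theta_m$ (and $\theta_m'$). Concretely, conditioning on Alice and Bob obtaining outcomes in $\{2m,2m+1\}$ (resp.\ $\{2m+1,2m+2\}$), the normalized sub-state $\ket{\psi_m}$ together with the binary observables obtained by coarse-graining the $x=1,2$ and $y=0,1,2,3$ measurements to these outcome pairs approximately maximizes the tilted-CHSH functional with parameter $\alpha_m$ determined by $\theta_m$. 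Here I need to be careful that the block weights $(c_{2m}^2+c_{2m+1}^2)$ may be small; the robustness bound will degrade when these are small, which is the source of the $d^3$ factor (summing $d/2$ blocks, each contributing an error that can be amplified by the inverse block weight, and each block weight is at least $\Omega(1/d)$ or so after normalization — this is where the polynomial-in-$d$ loss comes from).

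Next, for each block I would invoke Lemma \ref{Bamps lemma} to extract, on the sub-state, local ``Pauli-like'' operators $Z_A^{(m)}, X_A^{(m)}$ and $Z_B^{(m)}, X_B^{(m)}$ satisfying the two approximate relations with $O(\sqrt{\epsilon_m})$ error, where $\epsilon_m$ is the local tilted-CHSH defect for block $m$. The key structural step — and I expect this to be the main obstacle — is to assemble these per-block isometries into a single global isometry $\Phi_A\otimes\Phi_B$ and show that the errors add up controllably. The subtlety is that the blocks for the $\{0,1\}$ questions and the blocks for the $\{2,3\}$ questions are ``shifted'' relative to each other (one starts at $\ket{0}$, the other ends at $\ket{d-1}$), so the operator that moves amplitude from $\ket{2m}\ket{2m}$ to $\ket{2m+1}\ket{2m+1}$ within one block must be chained with the operator that moves $\ket{2m+1}\ket{2m+1}$ to $\ket{2m+2}\ket{2m+2}$ in the adjacent shifted block. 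One shows that $x=0$ (Alice's computational-basis measurement) provides the common reference frame that aligns all blocks, and that the composition of $O(d)$ approximate partial isometries incurs at most $O(d)$ times the per-step error, i.e.\ $O(d\cdot d\cdot\epsilon^{1/4}) = O(d^2 \epsilon^{1/4})$ before accounting for block-weight amplification, giving $O(d^3\epsilon^{1/4})$ overall.

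Finally I would verify the two self-testing conditions \eqref{eq: state} and \eqref{eq: measurements}: condition \eqref{eq: state} follows by writing $\ket{\psi}$ as a sum over blocks, applying the block-wise state-extraction bounds, and using triangle inequality across the $O(d)$ blocks; condition \eqref{eq: measurements} follows from the operator relations of Lemma \ref{Bamps lemma} together with the fact that the $x\in\{1,2\}$ and $y\in\{0,1,2,3\}$ observables are, by construction, direct sums of the per-block rotated-Pauli observables, so that the global isometry conjugates them (approximately) to the ideal direct-sum observables of Definition \ref{ideal measurements - many answers}. The case $d$ even is handled by the same argument with one fewer boundary term. Throughout, the only genuinely new content relative to \cite{CGS16} is the explicit propagation of the $\epsilon^{1/4}$ robustness and the bookkeeping of the $d$-dependence; the conceptual skeleton — block-diagonalize, apply tilted CHSH per block, glue via a shared reference measurement — is inherited from there.
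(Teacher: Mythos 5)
Your skeleton---read off the block-diagonal structure from Definition \ref{correlations many answers}, run a tilted-CHSH self-test on each normalized sub-state via Lemma \ref{Bamps lemma}, and chain the resulting per-block ``flip'' operators across the two interleaved block structures using Alice's computational-basis measurement as the common reference---is indeed the architecture of the paper's proof (which follows \cite{CGS16} and feeds the output into the robust Yang--Navascu\'es criterion, Lemma \ref{YNcriterion}). But there is a genuine gap in your error accounting: you never derive the fourth root. Lemma \ref{Bamps lemma} yields per-block relations with error $O(\sqrt{\epsilon})$, and chaining $O(d)$ of them to establish condition \eqref{condition:YN-criterion-4} gives $O(d\sqrt{\epsilon})$; nothing in your argument converts $\sqrt{\epsilon}$ into $\epsilon^{1/4}$, yet you write ``$O(d\cdot d\cdot \epsilon^{1/4})$'' as if a per-step error of $\epsilon^{1/4}$ were already in hand. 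The missing ingredient is the passage from the projections $P_A^{(k)}$---which are orthogonal only approximately, and only when acting on $\ket{\psi}$---to \emph{exactly} orthogonal projections, via the Kempe--Vidick orthogonalization lemma \cite{KV10} (Lemma \ref{projections}, invoked inside Lemma \ref{YNcriterion}). That step costs a second square root: a pairwise-overlap defect of $\eta$ on the state becomes only $O(\sqrt{\eta})$ closeness to genuinely orthogonal projections. It is not optional bookkeeping, because the global isometry is a SWAP-type construction whose controlled operator $\sum_k \omega^k \tilde{P}^{(k)} + \Id - \sum_k \tilde{P}^{(k)}$ is unitary only for exactly orthogonal $\tilde{P}^{(k)}$. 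Your plan of ``assembling per-block isometries'' skips precisely the point where this square root enters, so the claimed $O(d^3\epsilon^{1/4})$ is asserted rather than derived.

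Relatedly, your explanation of the $d^3$ factor is incorrect. The block weights $w_m = c_{2m}^2+c_{2m+1}^2$ are \emph{not} bounded below by $\Omega(1/d)$: the theorem is claimed for arbitrary coefficients, and in the application to Theorem \ref{thm1} they decay like $(i+1)^{-16}$. The weight amplification you worry about in fact cancels exactly: the tilted-CHSH defect of the normalized sub-state $\ket{\psi_m}$ is $O(\epsilon/w_m)$, Lemma \ref{Bamps lemma} then gives relations on $\ket{\psi_m}$ with error $O(\sqrt{\epsilon/w_m})$, and rescaling back to $\Id_{A_{m,Z}}\ket{\psi} = \sqrt{w_m}\,\ket{\psi_m}$ multiplies by $\sqrt{w_m}$, leaving $O(\sqrt{\epsilon})$ uniformly in $m$. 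The actual source of $d^3$ is: the $O(d)$-fold chaining gives condition error $\eta = O(d\sqrt{\epsilon})$ in Lemma \ref{YNcriterion}, whose conclusion is $O(d^{5/2}\eta^{1/2})$ (the $d^{5/2}$ coming from the orthogonalization bookkeeping and the accumulation of errors over the $d$ terms of the swap isometry); substituting $\eta = d\sqrt{\epsilon}$ yields $O(d^{3}\epsilon^{1/4})$. To repair your proposal you would need to add the orthogonalization step and the explicit swap-isometry analysis (or cite Lemma \ref{YNcriterion} and verify its four hypotheses with explicit error bounds), and replace the block-weight heuristic with the cancellation above.
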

\noindent \textit{Proof.} Obtaining this (unoptimized) robustness bound is a straightforward adaption of the proof from \cite{CGS16}, and we include a proof in the Appendix for completeness. \\

\subsection{Correlations with finite question sets and infinite answer sets}
\label{2.2}
We are ready to present a correlation separating $\mathcal{C}_{q}^{3,4,\infty,\infty}$ and  $\mathcal{C}_{qs}^{3,4,\infty,\infty}$. Informally, this is defined as the limit of the correlations described in the previous subsection as the answer sets size tends to infinity, for some appropriate choice of $\ket{\Psi} = \sum_i c_i\ket{ii}$. We still have $\mathcal{X} = \{0,1,2\}$ and $\mathcal{Y} = \{0,1,2,3\}$, but now $\mathcal{A} = \mathcal{B} = \mathbb{N}$.

To make the definition rigorous, we introduce some notation. For any correlation $\{p(a,b|x,y)\}_{x,y}$ on finite question and answer sets $\mathcal{X}, \mathcal{Y}, \mathcal{A}, \mathcal{B}$, define its \textit{lift} to countably infinite answer sets to be the correlation $\{\hat{p}(a,b,x,y): (a,b) \in \mathbb{N}^2\}_{x,y}$, on the same question sets $\mathcal{X}, \mathcal{Y}$, such that, $\forall (x,y) \in \mathcal{X} \times \mathcal{Y}$, $\hat{p}(a,b,x,y) = p(a,b|x,y)$ for $(a,b) \in \mathcal{A} \times \mathcal{B}$, and $\hat{p}(a,b,x,y) = 0$ otherwise. 

From now on, we use $\{p^*_N(a,b|x,y)\}_{x,y}$ to refer to the ideal correlation from Definition \ref{ideal measurements - many answers}, specifically the one self-testing the state $\ket{\Psi_N}= C_N \cdot \sum_{i=0}^{N-1} \frac{1}{(i+1)^8} \ket{ii}$, for $N$ odd, where $C_N$ is a normalizing constant (and precisely $C_N = \sqrt{H_{N-1}^{(16)}}$, where $H_N^{(r)}:= \sum_{n=1}^N\frac1{n^r}$ are the generalized harmonic numbers.) The reader might wonder about the choice to have coefficients proportional to $\frac{1}{(i+1)^8}$ in the definition of $\ket{\Psi_N}$. The reason for this is that the choice of the coefficients in turn determines the rate at which the corresponding correlations $\hat{p}^*_N$ converge. In order for the proof of Theorem \ref{thm1} to work, we need the rate of convergence to be fast enough relative to the dependence on the local dimension in the robustness bound of our self-testing result from Theorem \ref{thm3} (more details on this in Section \ref{sec:proof_thm1}). 

\begin{definition}(Separating correlation, many answers)
\label{separating correlations defn}
We define the separating correlation in the \textit{many answers} case to be $p^*_{\infty} := \lim_{K \rightarrow \infty} \hat{p}^*_{2K+1}$, where the limit is defined pointwise.
\end{definition}
Notice that the limit is well-defined, since $\forall a,b,x,y$ the sequence $\big(\hat{p}_{2K+1}^*(a,b|x,y)\big)_K$ is easily seen to be convergent. 

For completeness we describe the ideal measurements achieving the separating correlations. We describe them in terms of generic coefficients $c_i$, although the particular choice made in Definition \ref{separating correlations defn} imposes $c_i =  C \cdot \frac{1}{(i+1)^8}$, where $C$ is a normalizing constant (precisely $C = \sqrt{\frac{3217 \pi^{16}}{325641566250}}$).
\begin{definition} (Ideal measurements for the separating correlation)
Let $\ket{\Psi_{\infty}} = \sum_{i=0}^{\infty} c_i \ket{ii}$, with $\sum_{i=0}^{\infty} c_i^2 = 1$ .

For $x=0$, Alice measures in the computational basis (i.e. in the basis $\{\ket{0},\ket{1}, \ldots\}$). For $x=1$ and $x=2$, she measures in the eigenbases of observables $\bigoplus_{m=0}^{\infty} [\sigma_x]_m$ and $\bigoplus_{m=0}^{\infty} [\sigma_x]'_m$ respectively, with the natural assignments of measurement outcomes.

In a similar way, for $y=0$ and $y=1$, Bob measures in the eigenbases of observables $\bigoplus_{m=0}^{\infty} [\cos{(\mu_m)}\sigma_z+\sin{(\mu_m)}\sigma_x]_m$ and $\bigoplus_{m=0}^{\infty} [\cos{(\mu_m)}\sigma_z-\sin{(\mu_m)}\sigma_x]_m$ respectively, with the natural assignments of measurement outcomes. Here $\mu_m = \arctan(\sin(2\theta_m))$, where $\theta_m  = \arctan(\frac{c_{2m+1}}{c_{2m}})$. For $y=2$ and $y=3$, he measures in the eigenbases of $\bigoplus_{m=0}^{\infty} [\cos{(\mu'_m)}\sigma_z+\sin{(\mu'_m)}\sigma_x]'_m$ and $\bigoplus_{m=0}^{\infty} [\cos{(\mu'_m)}\sigma_z-\sin{(\mu'_m)}\sigma_x]'_m$respectively, where $\mu'_m = \arctan(\sin(2\theta'_m))$ and $\theta'_m = \arctan(\frac{c_{2m+2}}{c_{2m+1}})$.

\end{definition}

It is straightforward to see that the ideal measurements above achieve $p^*_{\infty}$ (when $c_i = C \cdot \frac{1}{(i+1)^8}$).

\subsection{Proof of Theorem \ref{thm1}}
\label{sec:proof_thm1}
In this subsection, we prove Theorem \ref{thm1}.

\begin{claim}
\label{claim1}
There exists a function $\epsilon(N) = \alpha N^{-16} $, for some constant $\alpha$, such that 
\begin{equation}
|\{\hat{p}^*_N(a,b|x,y)\}_{x,y} - \{
p^*_{\infty}(a,b|x,y)\}_{x,y}|_{corr} \leq \epsilon(N) 
\end{equation}
\end{claim}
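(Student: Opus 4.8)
\textit{Proof proposal.} The plan is to control, separately for each question pair $(x,y)\in\mathcal X\times\mathcal Y$, the quantity $\sum_{a,b}\big|\hat p^*_N(a,b|x,y)-p^*_\infty(a,b|x,y)\big|$, and then take the supremum over $(x,y)$. The starting point is the structural observation that the ideal measurements of Definition~\ref{ideal measurements - many answers} depend on the Schmidt coefficients of the shared state \emph{only} through the angles $\theta_m=\arctan(c_{2m+1}/c_{2m})$ and $\theta'_m=\arctan(c_{2m+2}/c_{2m+1})$ (and the derived $\mu_m,\mu'_m$). Since the coefficients of $\ket{\Psi_N}$ and of $\ket{\Psi_\infty}$ are both proportional to $(i+1)^{-8}$, differing only through the normalizing constants $C_N$ and $C$, these ratios --- and hence the angles, and hence the ideal measurement operators themselves --- are the same for the two states, and in particular independent of $N$. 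Consequently the finite and the infinite ideal strategies use literally the same rank-one projectors on the common basis vectors $\ket 0,\dots,\ket{N-2}$; the only difference is that the finite strategy truncates the top block, promoting $\ket{N-1}$ to its own measurement outcome.

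Expanding in the Schmidt basis, $\bra{\Psi} A^a_x\otimes B^b_y\ket{\Psi}=\sum_{i,j}c_ic_j\,(A^a_x)_{ij}(B^b_y)_{ij}$, and using that every relevant projector is supported on at most two consecutive basis vectors, each correlation entry can be written as a finite sum $T_{xy}(a,b)=\sum_{|i-j|\le 1}\kappa^{xy}_{abij}\,c_ic_j$, whose coefficients $\kappa^{xy}_{abij}$ are fixed matrix entries of the (state-independent) projectors, satisfy $\sum_{a,b}|\kappa^{xy}_{abij}|=O(1)$ for each pair $i,j$ (because the $\{A^a_x\}_a$ and $\{B^b_y\}_b$ are projective measurements), and coincide between the two strategies for all $i,j\le N-2$. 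Substituting $c_i=C_N(i+1)^{-8}$ for $\hat p^*_N$ and $c_i=C(i+1)^{-8}$ for $p^*_\infty$, the $\ell_1$ distance above splits into: (i) a \emph{normalization} part, the contribution of indices $i,j\le N-2$, bounded by $|C_N^2-C^2|\cdot O(1)\cdot\sum_{n\ge1}n^{-16}=O(|C_N^2-C^2|)$; and (ii) a \emph{truncation} part, coming from indices $\ge N-1$, which for $p^*_\infty$ has total weight $O\!\big(C^2\sum_{i\ge N-1}(i+1)^{-16}\big)=O\!\big(\sum_{n\ge N}n^{-16}\big)$ and for $\hat p^*_N$ reduces to the single outcome $\ket{N-1}$ of weight $C_N^2N^{-16}$.

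To finish one estimates the two quantities $|C_N^2-C^2|$ and $\sum_{n\ge N}n^{-16}$. Writing $C_N^2=1/H_N^{(16)}$ and $C^2=1/\zeta(16)$, one has $|C_N^2-C^2|=\big(\zeta(16)-H_N^{(16)}\big)/\big(\zeta(16)H_N^{(16)}\big)\le\zeta(16)-H_N^{(16)}=\sum_{n>N}n^{-16}$, so both quantities are tails of the convergent series $\sum_n n^{-16}$; a routine integral comparison bounds such a tail polynomially in $N$, and absorbing all $O(1)$ constants into $\alpha$ yields $\epsilon(N)=\alpha N^{-16}$ as claimed.

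The one genuinely important step is the structural observation of the first paragraph: recognizing that the angles (hence the ideal measurement operators) are insensitive to both the normalization and the truncation is what makes the discrepancy between $\hat p^*_N$ and $p^*_\infty$ \emph{quadratic} in the Schmidt coefficients --- and therefore governed by $\sum_i c_i^2$-type tails --- rather than merely linear in the state distance $\|\ket{\Psi_N}-\ket{\Psi_\infty}\|$, which would be far too weak for the application in Section~\ref{sec:proof_thm1}. A modest amount of bookkeeping is also needed at the boundary index $N-1$, where the two strategies honestly differ, but its contribution is of the same (negligible) order.
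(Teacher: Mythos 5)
Your route is genuinely different from the paper's, and in one respect sharper. The paper's proof is a one\-/line reduction: the two correlations arise from the \emph{same} measurement settings applied to $\ket{\Psi_N}$ and $\ket{\Psi_\infty}$, so the correlation distance is controlled by the trace distance between the states. Your entry\-/wise computation instead exploits that the ratios $c_{i+1}/c_i$ --- hence all the angles $\theta_m,\mu_m$ and all the projectors on the common basis vectors --- are untouched by truncation and renormalization, so the discrepancy reduces to \emph{quadratic} tails $\sum_{n>N}n^{-16}$ plus the normalization defect $|C_N^2-C^2|$. This is the better accounting: the pure-state trace distance is $2\sqrt{1-|\langle\Psi_N|\Psi_\infty\rangle|^2}=\Theta(N^{-15/2})$, whereas your bookkeeping gives $\Theta(N^{-15})$. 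The structural observation in your first paragraph is correct and is exactly what makes the claim useful.

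Your final step, however, is wrong, and not merely cosmetically. By integral comparison $\sum_{n>N}n^{-16}=\Theta(N^{-15})$, not $O(N^{-16})$; the extra factor of $N$ cannot be ``absorbed into $\alpha$,'' since $\alpha$ must be independent of $N$. The same is true of your normalization term: $|C_N^2-C^2|=\bigl(\zeta(16)-H_N^{(16)}\bigr)/\bigl(\zeta(16)H_N^{(16)}\bigr)=\Theta(N^{-15})$. So what your argument actually proves is $|\hat p^*_N-p^*_\infty|_{corr}=O(N^{-15})$, and this exponent is tight: for $(x,y)=(0,0)$ the infinite correlation places total mass $\Theta\bigl(\sum_{i\ge N}c_i^2\bigr)=\Theta(N^{-15})$ on outcomes outside $\{0,\dots,N-1\}^2$, where $\hat p^*_N$ places none, so the claim with exponent $16$ cannot be rescued by any constant. (The paper's own one\-/line proof carries the same off\-/by\-/one, asserting the trace distance shrinks as $O(N^{-16})$.) The slip is harmless downstream --- the proof of Theorem \ref{thm1} only needs $\epsilon(N)=O(N^{-c})$ for a fixed constant $c$, with the exponents in the final dimension bound adjusted accordingly --- but you should state the bound your argument actually delivers rather than paper over the missing factor of $N$.
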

\begin{proof}
This is straightforward to see from the definitions of $\hat{p}^*_N$ and $p^*_{\infty}$ from subsection \ref{2.2}. In particular, the former is obtained by measuring $\ket{\Psi_N}= C_N \cdot \sum_{i=0}^{N-1} \frac{1}{(i+1)^8} \ket{ii}$, and the latter by measuring $\ket{\Psi_{\infty}}= C \cdot \sum_{i=0}^{\infty} \frac{1}{(i+1)^8} \ket{ii}$ with the same measurement settings, where $C_N$ and $C$ are the constants from subsection \ref{2.2}. It is clear that the trace distance $\|\ket{\Psi_N}\bra{\Psi_N}- \ket{\Psi_{\infty}}\bra{\Psi_{\infty}}\|_1$ shrinks as $O(N^{-16})$, and this implies the claim.
\end{proof}

\begin{proof}[Proof of Theorem \ref{thm1}]
We will show that, for any fixed dimension $N'$, $p^*_{\infty} \notin \mathcal{C}_{q \leq N'}^{3,4,\infty,\infty}$. This immediately implies that $p^*_{\infty} \notin \bigcup \limits_{N'} \mathcal{C}_{q \leq N'}^{3,4,\infty,\infty}= \mathcal{C}_{q}^{3,4,\infty,\infty}$, and, hence, $\mathcal{C}_{q}^{3,4,\infty,\infty} \neq \mathcal{C}_{qs}^{3,4,\infty,\infty}$.
The proof can be broken up into a few parts (described informally):
\begin{enumerate}
\item[(i)] The separating correlations $p^*_{\infty}$ are $O(N^{-16})$-close to $\hat{p}^*_N$. This is the content of Claim \ref{claim1}.
\item[(ii)] If $\ket{\psi}$ is a state achieving correlations $\delta$-close to $p^*_{\infty} $, robustness in the self-testing result of Theorem \ref{thm3} implies that $\|\ket{\psi} - \ket{\Psi_N}\| \leq O\big(N^3 (\alpha N^{-16} + \delta)^{\frac14}\big)$ up to a local isometry. 
\item[(iii)] For the latter to be true, $\ket{\psi}$ must have dimension $\Omega\big(\frac{1}{\delta^{32}}\big)$. 
\end{enumerate}
\vspace{0.1cm}
We describe the above steps formally.

Suppose there exists $p \in \mathcal{C}_{q \leq N'}^{3,4,\infty,\infty}$ such that $|p - p^*_{\infty}|_{corr} = \delta$, for some $\delta > 0$ and some dimension $N'$. Then, by a triangle inequality using Claim \ref{claim1}, we have that for all $N$
\begin{equation}
\label{eq7}
|p  - \hat{p}^*_N|_{corr} \leq \epsilon(N) + \delta
\end{equation}

Now, define a new correlation $p_N$ obtained from $p$ by simply classically post-processing the outcomes for the $N'$-dimensional quantum strategy achieving $p$ so that each of Alice and Bob maps outcomes in $\mathbb{N}\setminus\{0,\ldots ,N-1\}$ to outcome $0$. Clearly, then, $p_N$ is still an $N'$-dimensional quantum correlation, and we can view it either as a strategy in $ \mathcal{C}_{q \leq N'}^{3,4,N,N}$ or in $\mathcal{C}_{q \leq N'}^{3,4,\infty,\infty}$ with zero probability mass on the outcomes outside of $\{0,\ldots ,N-1\}^2$. To be precise, we denote the former by $p_N$ and the latter by $\hat{p}_N$. Moreover, notice that $ \forall (x,y)$, $\sum_{a,b \in \mathbb{N}^2\setminus\{0,\ldots ,N-1\}^2}p(a,b|x,y) \leq \epsilon(N) + \delta$, by $\eqref{eq7}$. Hence, it's easy to see that $|\hat{p}_N  - p|_{corr} \leq 2(\epsilon(N) + \delta)$. Then, by a triangle inequality, 
\begin{align}
&|\hat{p}_N  - \hat{p}^*_N|_{corr} \leq 3\epsilon(N) + 3\delta \\
\Rightarrow \,\,\,& |p_N - p^*_N|_{corr} \leq 3\epsilon(N) + 3\delta \label{eq15}
\end{align}

And this holds for all $N$. Now, denote by $\ket{\psi}$ the state, of Schmidt rank at most $N'$, that achieves the correlation $p_N$. Then, by Theorem \ref{thm3}, there exists a family of local isometries $\{\Phi_N\}$ and auxiliary states $\{\ket{extra_N}\}$ such that
\begin{equation}
\|\Phi_N(\ket{\psi}) - \ket{\Psi_N} \otimes \ket{extra_N}\| = O\Big(N^3 \big(\epsilon(N)+ \delta\big)^{\frac14}\Big) = O\left(N^3\Big(\alpha N^{-16} + \delta\Big)^{\frac14}\right) \label{eq9}
\end{equation}
Notice that since $\Phi_N$ is a local isometry, then $\Phi_N(\ket{\psi})$ has Schmidt rank at most $N'$, while $\ket{\Psi_N} \otimes \ket{extra}$ has Schmidt rank at least $N$, with Schmidt coefficients $C_N \cdot \frac{1}{(i+1)^8} \cdot \alpha_j$, where the $\alpha_j$ are the Schmidt coefficients of $\ket{extra}$.
This implies that
\begin{equation}
\label{eq:13}
\| \Phi_N(\ket{\psi}) - \ket{\Psi_N} \otimes \ket{extra_N} \| =\Omega(N'^{-8})
\end{equation}

Now, choose $N \approx \delta^{-\frac{1}{16}}$ in \eqref{eq9}. This gives
\begin{equation}
\label{eq:14}
\|\Phi_N(\ket{\psi}) - \ket{\Psi_N} \otimes \ket{extra_N}\| = O( \delta^{-\frac{1}{48}}\cdot \delta^{\frac14})
\end{equation}

The only way for equations \eqref{eq:13} and \eqref{eq:14} to be compatible is that $N' = \Omega\big(\delta^{-\frac{1}{32}}\big)$, which completes the proof of Theorem \ref{thm1}.
\end{proof}

\section{Infinite question sets and finite answer sets}
\label{infinite questions}
We turn to the case of infinite question sets and finite answer sets. We start by presenting novel correlations that self-test any entangled pair of qudits for any finite dimension $d$. 
Then we extend these to correlations on infinite question sets that give us the desired separation between finite and infinite-dimensional quantum correlations.

\subsection{Self-testing all pure bipartite entangled states with $O(d)$ measurements and $\leq 4$ outcomes per party}
\label{new self-test}

Following the structure of section \ref{infinite answers}, we describe the self-testing correlation by first presenting the ideal state and ideal measurements that achieve it. These consist of $\frac32 d$ and $\frac52 d$ measurements with $4$ and $3$ outcomes for Alice and Bob respectively. Afterward, we define the essential properties of the ideal correlation arising from these measurements, which are enough to characterize it. Our main result will be that the ideal correlation self-tests the ideal state. The question sets are $\mathcal{X} = \{0,1,...\frac{d}{2}-1\} \times\{Z,X,X'\}$, $\mathcal{Y} = \{0,\ldots ,\frac{d}{2}-1\} \times\{Z,Z',X,X', \mbox{Aux}\}$, and the answer sets are $\mathcal{A} = \{0,1,2, \perp\}$, $\mathcal{B} = \{0,1, \perp\}$.

\begin{definition}[Ideal measurements---many questions, finite case]
\label{ideal measurements many questions finite case}
Assume $d$ is even, the case $d$ odd being similar. Let $\ket{\Psi} = \sum_{i=0}^{d-1} c_i \ket{ii}$ with $\sum_{i=0}^{d-1} c_i^2 = 1$.

\begin{itemize}

\item For $m=0,\ldots ,\frac{d}{2}-1$ and $x = (m,Z)$, Alice performs the projective measurement 
\begin{equation}
\left\{\ket{2m}\bra{2m}, \ket{2m+1}\bra{2m+1}, \ket{2m+2}\bra{2m+2}, \Id - P_1\right\},
\end{equation}
where $P_1$ is the sum of the first three projections. 
She assigns measurement outcomes $0,1,2,\perp$, respectively to the four projectors above in the order they are listed.

For $m=0,\ldots ,\frac{d}{2}-1$ and $x = (m,X)$, Alice performs the projective measurement 
\begin{equation}
\left\{\frac{\ket{2m}+\ket{2m+1}}{\sqrt{2}}\frac{\bra{2m}+\bra{2m+1}}{\sqrt{2}}, \frac{\ket{2m}-\ket{2m+1}}{\sqrt{2}}\frac{\bra{2m}-\bra{2m+1}}{\sqrt{2}}, \, \ket{2m+2}\bra{2m+2},\, \Id - P_2\right\},
\end{equation}
where $P_2$ is the sum of the first three projections. 
She assigns measurement outcomes $0,1,2,\perp$, respectively.

For $m=0,\ldots ,\frac{d}{2}-1$ and $x = (m,X')$, Alice performs the projective measurement 
\begin{equation}
\left\{\ket{2m}\bra{2m}, \frac{\ket{2m+1}+\ket{2m+2}}{\sqrt{2}}\frac{\bra{2m+1}+\bra{2m+2}}{\sqrt{2}}, \frac{\ket{2m+1}-\ket{2m+2}}{\sqrt{2}}\frac{\bra{2m+1}-\bra{2m+2}}{\sqrt{2}},\, \Id - P_3\right\}, 
\end{equation}
where $P_3$ is the sum of the first three projections.
She assigns measurement outcomes $0,1,2,\perp$, respectively.

\item Bob, instead, for $m=0,\ldots ,\frac{d}{2}-1$ and $y = (m,Z)$ performs the following measurement:
Two projectors are onto the $\pm 1$ eigenvectors of $[\cos{(\mu_m)}\sigma_z+\sin{(\mu_m)}\sigma_x]_m$, where $\mu_m = \arctan (\sin 2\theta_m)$ and $\theta_m  = \arctan(\frac{c_{2m+1}}{c_{2m}})$. To these, Bob assigns respectively outcomes $0$ and $1$. The third projector is on everything else, and corresponds to outcome $\perp$. 

For $m=0,\ldots ,\frac{d}{2}-1$ and $y = (m,X)$, the same but with $[\cos{(\mu_m)}\sigma_z-\sin{(\mu_m)}\sigma_x]_m$.

For $m=0,\ldots ,\frac{d}{2}-1$ and $y = (m,Z'), (m,X')$, Bob's measurements are the same as above, except with $[\cos{(\mu'_m)}\sigma_z+\sin{(\mu'_m)}\sigma_x]_m$ and $[\cos{(\mu'_m)}\sigma_z-\sin{(\mu'_m)}\sigma_x]_m$ respectively, where $\mu_m' = \arctan (\sin 2\theta_m')$, with $\theta'_m  = \arctan(\frac{c_{2m+2}}{c_{2m+1}})$.

For $m=0,\ldots ,\frac{d}{2}-1$ and $y = (m, \mbox{Aux})$, Bob performs the projective measurement
\begin{equation}
\left\{\ket{2m}\bra{2m},\ket{2m+1}\bra{2m+1}, \, \Id - P'\right\},
\end{equation}
where $P'$ is the sum of the first two projections. He assigns measurement outcomes $0,1 \perp$ respectively.

\end{itemize}
\end{definition}


We refer to the correlation arising from the ideal measurements of Definition \ref{ideal measurements many questions finite case} as the ideal correlation. Now, we extract the essential self-testing properties of the correlation resulting from the ideal measurements. Stating them concisely will aid the proof of Theorem \ref{thm4}. Recall that $T^{\text{tilted}}_{ij;\theta}$, for $i,j \in \{0,1\}$, are the $2 \times 2$ correlation tables which correspond to the maximal violation of the tilted-CHSH inequality which self-tests the state $\cos{\theta}\ket{00}+\sin{\theta}\ket{11}$. 

\begin{definition}[Self-testing properties---many questions, finite case]
\label{correlations many questions finite case}
Let $\ket{\Psi} = \sum_{i=0}^{d-1} c_i \ket{ii}$, with $\sum_i c_i^2 = 1$. The self-testing properties of the ideal correlation for $\ket{\Psi}$ are:
\begin{enumerate}[(i)]
\item 
\label{many-question-self-test-criterion-1}
For $m=0,\ldots ,\frac{d}{2}-1$ and $x,y \in \{m\} \times \{Z,X\}$ the $4\times 4$ table $T_{xy}$ has the following form:
\begin{table}[H]
\caption{$T_{xy}$ for $x,y \in \{m\} \times \{Z,X\}$}
\label{tab:txy7}
\begin{center}
\begin{tabular}{| c || c | c | c | c | c |}
	\hline
	$a \backslash b$ & 0 & 1 & $\perp$ \\ \hline \hline
	0 & \multicolumn{2}{| c |}{\multirow{2}{*}{$C_{x,y,m}$}} & 0 \\ 
	\hhline{*{1}{|-}*{2}{|~}*{3}{|-}}
	1 & \multicolumn{2}{| c |}{} & 0  \\ \hline
	2 & 0 & 0 & *\\
	\hline
	$\perp$ & 0 & 0 & * \\ \hline
\end{tabular}
\end{center}
\end{table}
Here, define $f:\mathcal{X}\rightarrow \{0,1\}$, $g:\mathcal{Y} \rightarrow \{0,1\}$ so that $f\left((\cdot,Z)\right) = g\left((\cdot,Z)\right) = g\left((\cdot,Z')\right) = 0$ and $f\left((\cdot,X)\right) = f\left((\cdot,X')\right)= g\left((\cdot,X)\right) = g\left((\cdot,X')\right) = 1$. Then, $C_{x,y,m}$ is given by $(c_{2m}^2+c_{2m+1}^2)\cdot T^{\text{tilted}}_{f(x)f(y);\theta_m}$, where $\theta_m :=\arctan \left(\frac{c_{2m+1}}{c_{2m}}\right) \in (0,\frac{\pi}{2})$.

\item 
\label{many-question-self-test-criterion-2}
For $m=0,\ldots ,\frac{d}{2}-1$ and $x \in \{m\} \times \{Z,X'\},  y \in \{m\} \times \{Z',X'\}$ the $4\times 4$ table $T_{xy}$ has the form: 
\begin{table}[H]
\caption{$T_{xy}$ for $x \in \{m\} \times \{Z,X'\},  y \in \{m\} \times \{Z',X'\}$}
\label{tab:txy8}
\begin{center}
\begin{tabular}{| c || c | c | c | c | c |}
	\hline
	$a \backslash b$ & 0 & 1 & $\perp$ \\ 
    \hline \hline
	0 & 0 & 0 & *\\ \hline
	1 & \multicolumn{2}{| c |}{\multirow{2}{*}{$D_{x,y,m}$}} & 0 \\ 
	\hhline{*{1}{|-}*{2}{|~}*{3}{|-}}
	2 & \multicolumn{2}{| c |}{} & 0  \\
	\hline
	$\perp$ & 0 & 0 & * \\ \hline
\end{tabular}
\end{center}
\end{table}
where $D_{x,y,m}:= (c_{2m+1}^2+c_{2m+2}^2) \cdot C^{ideal}_{f(x),g(y); \theta_m'}$, where $\theta_m' :=\arctan\left(\frac{c_{2m+2}}{c_{2m+1}}\right) \in (0,\frac{\pi}{2})$.

\item 
\label{many-question-self-test-criterion-3}
For $m=0,\ldots,\frac{d}{2}-2$, we have
\begin{align}
p\left(a=2|x=(m,Z)\right) = p\left(a=0|x=(m+1,Z)\right) = p\left(b=0|y=(m+1,\mbox{Aux})\right) &= c_{2m+2}^2,
\\
p\left(2,0|(m,Z), (m+1,\mbox{Aux})\right) = p\left(0,0|(m+1,Z), (m+1,\mbox{Aux})\right) &= c_{2m+2}^2.
\end{align}
\item 
\label{many-question-self-test-criterion-4}
$\forall a,b \in \{0,1\}$ and $m\neq m'$, $p\left(a,b|(m,Z),(m',\mbox{Aux})\right) = 0$.

\end{enumerate}
\end{definition}


\vspace{5mm}

The following is our self-testing result.
\begin{theorem}
\label{thm4}
The ideal correlation from Definition \ref{ideal measurements many questions finite case} self-tests the state $\ket{\Psi} = \sum_{i=0}^{d-1}c_i \ket{ii}$, with $O(d^3\epsilon^{\frac14})$ robustness.
\end{theorem}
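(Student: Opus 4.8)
\textbf{Proof plan for Theorem \ref{thm4}.}

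The plan is to decompose the problem into two parts, exactly mirroring the structure of the correlation's self-testing properties. First, within each fixed block $m$, properties \eqref{many-question-self-test-criterion-1} and \eqref{many-question-self-test-criterion-2} of Definition \ref{correlations many questions finite case} encode two copies of ideal tilted-CHSH correlations (one on the pair $\{Z,X\}$ of settings giving the block on outcomes $\{0,1\}$, and one on $\{Z',X'\}$ giving the ``shifted'' block on outcomes $\{1,2\}$). The first step is therefore: for any strategy $\epsilon$-reproducing the ideal correlation, apply Lemma \ref{Bamps lemma} inside each block. Concretely, let $\Pi_m^A$ be Alice's projector onto outcomes $\{0,1\}$ for setting $(m,Z)$ (and similarly the relevant ``restricted'' projectors for the other settings), and define binary $\pm 1$ observables by restriction to the two-dimensional outcome subspaces; the $4\times 4$ tables being $\epsilon$-close to the stated block form means these restricted observables $\epsilon'$-violate the appropriate tilted-CHSH inequality with $\epsilon' = O(\epsilon)$ (after accounting for the weights $c_{2m}^2+c_{2m+1}^2$, etc., which appear as normalizations). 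Lemma \ref{Bamps lemma} then yields, for each $m$, local ``dilated'' operators $Z_A^{(m)}, X_A^{(m)}, Z_B^{(m)}, X_B^{(m)}$ and $Z_A'^{(m)}, \ldots$ satisfying the two approximate relations there with error $O(\sqrt{\epsilon})$. This gives, block by block, an approximate qubit structure and a local isometry $\Phi_m = \Phi_{A,m}\otimes\Phi_{B,m}$ (the standard SWAP-type isometry built from these $Z,X$ operators) under which the state restricted to block $m$ looks like $\cos\theta_m\ket{00}+\sin\theta_m\ket{11}$ tensored with junk.

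The second, and genuinely new, step is to \emph{glue} the blocks together using properties \eqref{many-question-self-test-criterion-3} and \eqref{many-question-self-test-criterion-4}. The role of the $\mbox{Aux}$ setting and the ``$a=2$ in block $m$ matches $a=0$ in block $m{+}1$'' conditions is to certify that the computational-basis labeling is globally consistent: outcome $2$ of setting $(m,Z)$ and outcome $0$ of setting $(m{+}1,Z)$ project onto the \emph{same} one-dimensional subspace (morally $\ket{2m+2}$), with Bob's $(m{+}1,\mbox{Aux})$ outcome $0$ acting as the common reference. I would turn properties \eqref{many-question-self-test-criterion-3}–\eqref{many-question-self-test-criterion-4} into operator-norm statements: e.g. $\|(A^{a=2}_{(m,Z)} - A^{a=0}_{(m+1,Z)})\ket\psi\| = O(\sqrt\epsilon)$ and analogous overlap/orthogonality estimates relating Alice's and Bob's rank-one projectors $\ket{2m+2}\bra{2m+2}$ across consecutive blocks. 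These let me align the local isometries $\Phi_m$ into a single global isometry $\Phi = \Phi_A\otimes\Phi_B$, mapping the shared state to $\sum_i c_i\ket{ii}\otimes\ket{\mathrm{extra}}$, by an inductive/telescoping argument over $m=0,\ldots,\frac d2-1$: the tilted-CHSH self-tests fix each $\ket{ii}$ pair up to the overall phase/orientation of the shared two-qubit block, and the overlap conditions fix the \emph{relative} orientation of block $m$ and block $m{+}1$, so the errors accumulate additively across the $O(d)$ blocks.

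For the measurement guarantee \eqref{eq: measurements}, once the state is pinned down I would propagate the block-wise Lemma \ref{Bamps lemma} conclusions through $\Phi$ to recover the ideal measurement operators of Definition \ref{ideal measurements many questions finite case}, again block by block, using that outcomes labeled $\perp$ carry only negligible weight on $\ket\psi$ (the starred entries in Tables \ref{tab:txy7}–\ref{tab:txy8} are on subspaces nearly orthogonal to the state). Finally, the robustness bookkeeping: each of the $O(d)$ blocks contributes $O(\sqrt\epsilon)$ from Lemma \ref{Bamps lemma} and the gluing adds another $O(\sqrt\epsilon)$ per junction, but — as in Theorem \ref{thm3} and its proof in the Appendix — the normalization weights $c_{2m}^2+c_{2m+1}^2$ that divide the tilted-CHSH violations can be as small as $\mathrm{poly}(1/d)$, which is what degrades $\sqrt\epsilon$ to $\epsilon^{1/4}$ and introduces the $d^3$ prefactor; summing $O(d)$ such terms is absorbed into the same $O(d^3\epsilon^{1/4})$ bound. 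I expect the main obstacle to be precisely this gluing step: carefully defining the global isometry from the per-block SWAP isometries and controlling how the relative-orientation errors telescope without blowing up the $d$-dependence — the within-block part is essentially the argument of \cite{CGS16}/Theorem \ref{thm3} transplanted to the new question/answer sets, whereas the chaining through the $\mbox{Aux}$ register is the feature specific to this construction.
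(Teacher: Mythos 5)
Your reading of what each property of Definition \ref{correlations many questions finite case} is for is essentially right: properties (\ref{many-question-self-test-criterion-1})--(\ref{many-question-self-test-criterion-2}) give two interleaved tilted-CHSH blocks per $m$ to which Lemma \ref{Bamps lemma} applies (after the same normalization trick $\ket{\psi_m}=\Id_{A_{m,Z}}\ket\psi/\sqrt{c_{2m}^2+c_{2m+1}^2}$ the paper uses), and properties (\ref{many-question-self-test-criterion-3})--(\ref{many-question-self-test-criterion-4}), via the $\mbox{Aux}$ setting and Cauchy--Schwarz, certify $\Pi^2_{A_{(m,Z)}}\ket\psi=\Pi^0_{A_{(m+1,Z)}}\ket\psi$ and the cross-block orthogonality of the projections. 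That is exactly the paper's first half. The gap is in your final assembly step. The paper never constructs per-block SWAP isometries $\Phi_m$ and does not ``align'' or ``glue'' isometries; there is no clear way to do so, since each $\Phi_m$ is an isometry on the whole Hilbert space and two of them need not be compatible on the junk registers. Instead, all the chaining happens at the level of operators acting on $\ket\psi$: from Lemma \ref{Bamps lemma} one extracts unitarized ``flip'' operators $X^{\mathscr{u}}_{A/B,m}$, $Y^{\mathscr{u}}_{A/B,m}$ satisfying $X^{\mathscr{u}}_{A,m}P_A^{(2m+1)}\ket\psi=\tfrac{c_{2m+1}}{c_{2m}}X^{\mathscr{u}}_{B,m}P_A^{(2m)}\ket\psi$ and $Y^{\mathscr{u}}_{A,m}P_A^{(2m+2)}\ket\psi=\tfrac{c_{2m+2}}{c_{2m+1}}Y^{\mathscr{u}}_{B,m}P_A^{(2m+1)}\ket\psi$, and the global unitaries $X^{(k)}_{A/B}$ required by the Yang--Navascu\'es criterion (Lemma \ref{YNcriterion}) are alternating products of these, telescoping level $k$ down to level $0$. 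One then applies a \emph{single} SWAP-type isometry built from the full family $\{P^{(k)}\}_{k=0}^{d-1}$ (this is the content of the Appendix proof of Lemma \ref{YNcriterion}). So the ``relative orientation of consecutive blocks'' is fixed by operator identities feeding condition \eqref{condition:YN-criterion-4}, not by matching isometries; without this reduction your inductive gluing step is not a proof.

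Two smaller points. First, your robustness accounting misattributes the exponent: the degradation from $\sqrt\epsilon$ to $\epsilon^{1/4}$ does not come from the weights $c_{2m}^2+c_{2m+1}^2$ (the paper treats the $c_i$ as constants). It comes from composing two square-root losses: Lemma \ref{Bamps lemma} gives $O(\sqrt\epsilon)$ per block, chaining $O(d)$ flips gives $O(d\sqrt\epsilon)$ as the input error to Lemma \ref{YNcriterion}, and that lemma loses another square root (via the orthogonalization Lemma \ref{projections}), yielding $O(d^{5/2}(d\sqrt\epsilon)^{1/2})=O(d^3\epsilon^{1/4})$. Second, the theorem as stated (and as proved in the paper) only claims self-testing of the state, so the paragraph on recovering the ideal measurements is not needed.
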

The proof is an adaptation of the proof, from \cite{CGS16}, that all pure bipartite entangled states can be self-tested using $3$ and $4$ measurement settings respectively for Alice and Bob and $d$-outcome measurements. The proof will occupy the rest of this subsection, and will proceed by showing the existence of unitary operators satisfying a robust (and slightly more general) version of the Yang--Navascu\'es self-testing criterion from \cite{Yang13}, which we state as the following lemma.

\begin{lem}
\label{YNcriterion}
Let $\ket{\Psi} = \sum_{i=0}^{d-1} c_i \ket{ii}$, where $0< c_i < 1$ for all $i$ and $\sum_{i=0}^{d-1}c_i^2 = 1$. Suppose there exist unitary operators $X^{(k)}_{A}, X^{(k)}_{B}$ and projections (not necessarily orthogonal) $\{P^{(k)}_{A}\}_{k=0,\ldots ,d-1}$ and $\{P^{(k)}_{B}\}_{k=0,\ldots ,d-1}$ satisfying the following conditions:
\begin{align}
\label{condition:YN-criterion-1}
\| P_A^{(i)}P_A^{(j)}\ket{\psi} \| &\leq \epsilon  \,\,\, \forall i\neq j, \\
\label{condition:YN-criterion-2}
\norm{\left(\sum_k P_A^{(k)} - \Id\right) \ps} &\leq \epsilon,  \\ 
\label{condition:YN-criterion-3}
\norm{ (P^{(k)}_{A} - P^{(k)}_{B} )\ket{\psi} } &\leq \epsilon \;\; \forall k,  \\
\label{condition:YN-criterion-4}
\norm{ (X^{(k)}_{A}X^{(k)}_{B}P^{(k)}_{A} - \frac{c_k}{c_0} P^{(0)}_{A}) \ket{\psi} } &\leq \epsilon 
\end{align}
Then there exists a local isometry $\Phi$ such that 
\begin{equation}\| \Phi(\ket{\psi}) - \ket{\text{extra}}\otimes\ket{\Psi} \| = O(d^{\frac52}\epsilon^{\frac12}).
\end{equation}
\end{lem}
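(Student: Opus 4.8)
The plan is to exhibit the local isometry $\Phi$ explicitly via the standard ``swap gadget'' construction of Yang–Navascués, and then track how the four approximate conditions \eqref{condition:YN-criterion-1}–\eqref{condition:YN-criterion-4} propagate through the analysis, accumulating at most a polynomial-in-$d$ blow-up. Concretely, from the approximate projections $\{P_A^{(k)}\}$ and the unitaries $X_A^{(k)}, X_B^{(k)}$ I would first build, on each side, operators that play the role of a ``clock'' and a ``shift''. On Alice's side define (formal, to be regularized) $Z_A := \sum_k \omega^k P_A^{(k)}$ and a raising-type operator $X_A := \sum_k X_A^{(k)} P_A^{(k)}$ (and similarly for Bob), where the role of $X_A^{(k)}$ is to move weight from the $k$-th block back to the $0$-th block with the correct coefficient ratio $c_k/c_0$, by \eqref{condition:YN-criterion-4}. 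The isometry $\Phi_A \otimes \Phi_B$ is then the usual one that appends ancilla qudits $\ket{0}_{A'}\ket{0}_{B'}$, applies controlled-$X_A$, controlled-$X_B$ operations controlled on the ancillas (or equivalently a sum over $k$ of $P_A^{(k)} \otimes (\text{ancilla set to } k)$-type terms), so that $\Phi(\ket{\psi})$ is approximately $\ket{\text{extra}} \otimes \sum_k c_k \ket{kk}$.

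The key steps, in order, are: (1) Use \eqref{condition:YN-criterion-1} and \eqref{condition:YN-criterion-2} to show that $\{P_A^{(k)}\}$ acts on $\ket{\psi}$ like a genuine orthogonal resolution of the identity, so that $\ket{\psi} \approx \sum_k P_A^{(k)}\ket{\psi}$ with the cross terms negligible; a union bound over the $\binom{d}{2}$ pairs plus the $d$ terms in the sum is where a first factor of $\mathrm{poly}(d)$ enters. (2) Use \eqref{condition:YN-criterion-3} to replace $P_A^{(k)}$ by $P_B^{(k)}$ (or vice versa) when convenient, i.e. show $P_A^{(k)} P_B^{(j)}\ket{\psi} \approx \delta_{jk} P_A^{(k)}\ket{\psi}$. (3) Use \eqref{condition:YN-criterion-4}, $X_A^{(k)}X_B^{(k)}P_A^{(k)}\ket{\psi} \approx \tfrac{c_k}{c_0}P_A^{(0)}\ket{\psi}$, to show that each ``block component'' $P_A^{(k)}\ket{\psi}$ has norm close to $|c_k| \cdot \| P_A^{(0)}\ket{\psi}/c_0\|$ and, more importantly, that all the block components are ``aligned'' — i.e. $X_A^{(k)}X_B^{(k)}$ rotates the $k$-th block onto a common reference state $\ket{\text{junk}} := P_A^{(0)}\ket{\psi}/c_0$ (after normalization, $\| \ket{\text{junk}}\| \approx 1$ up to $O(d\epsilon)$ errors, using that $\sum_k c_k^2 = 1$). (4) Plug these into the swap isometry and compute $\Phi(\ket{\psi})$ term by term: the $k$-th term yields $\approx c_k \ket{kk} \otimes (\text{something independent of } k)$, and summing gives $\ket{\text{extra}}\otimes \ket{\Psi}$ with total error the sum of $O(d)$ contributions each of size $O(d\epsilon)$ or $O(d\sqrt{\epsilon})$, hence $O(d^2 \epsilon)$ or $O(d^{5/2}\sqrt{\epsilon})$ — matching the claimed $O(d^{5/2}\epsilon^{1/2})$ bound, which is the dominant term when $\epsilon$ is small.

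The main obstacle I anticipate is bookkeeping the error at step (3)–(4): since the $P_A^{(k)}$ are only approximately orthogonal and the $X_A^{(k)}$ are full unitaries (not block-respecting), one has to be careful that applying $X_A^{(k)}$ does not spread the error of one block across all the others in an uncontrolled way. The standard trick is to always hit $\ket{\psi}$ (or a block component of it) on the right, so that the only operators that ever act are in the algebra generated by $P_A^{(\cdot)}, X_A^{(\cdot)}$ applied to the fixed vector $\ket{\psi}$, and to use the near-orthogonality repeatedly to ``reset'' — each such use costs one $\epsilon$ and there are $O(d)$ of them, which is exactly the source of the $d^{5/2}$ factor (a $d$ from the number of blocks in the sum defining $\Phi$, a $\sqrt d$ from Cauchy–Schwarz when bounding $\|\sum_k v_k\| \le \sqrt d (\sum_k \|v_k\|^2)^{1/2}$ against cross-term estimates). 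I would also need a short lemma that the regularized versions of $Z_A, X_A$ (replacing zero eigenvalues by $1$, as in Lemma \ref{Bamps lemma}) differ from the formal expressions by something negligible on $\ket{\psi}$; this is routine given conditions \eqref{condition:YN-criterion-1}–\eqref{condition:YN-criterion-2}. Modulo this error accounting, the argument is the robust version of \cite{Yang13}, and the polynomial dependence $d^{5/2}$ is the expected (unoptimized) outcome.
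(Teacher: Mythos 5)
Your overall architecture is the right one and matches the paper's: build a ``clock'' operator $Z_{A/B}$ from the projections, run the standard swap isometry $\Phi = (R\,\bar F\, S\, F)$ with controlled-$Z^k$ and controlled-$X^{(k)}$ gates, and use condition \eqref{condition:YN-criterion-4} to collapse every block onto the reference state $\tfrac{1}{c_0}P_A^{(0)}\ket{\psi}$. However, there is a genuine gap at the very first step, and it is exactly the point your ``main obstacle'' paragraph gestures at without resolving. If you define $Z_A = \sum_k\omega^kP_A^{(k)} + \Id - \sum_kP_A^{(k)}$ directly from the \emph{approximately} orthogonal projections, then $Z_A$ is not unitary, so $S_{AA'}:\ket{\phi}\ket{k}\mapsto Z_A^k\ket{\phi}\ket{k}$ is not an isometry; worse, the expansion $Z_A^k\ket{\psi}\approx\sum_j\omega^{jk}P_A^{(j)}\ket{\psi}$ that the Fourier cancellation relies on is not controlled. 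Condition \eqref{condition:YN-criterion-1} only bounds products of projections applied directly to $\ket{\psi}$, whereas expanding the $k$-th power produces cross terms $P^{(j_1)}\cdots P^{(j_k)}\ket{\psi}$ whose number grows like $d^k$, and the iterative alternative ($Z_A^k\ket{\psi}=Z_A(Z_A^{k-1}\ket{\psi})$) multiplies the accumulated error by $\|Z_A\|$, which can be as large as $O(d)$ without operator-norm orthogonality. Either way the error is a priori exponential in $d$, not the $\mathrm{poly}(d)$ you claim. Your suggested regularization (replacing zero eigenvalues by $1$, as for the tilted-CHSH observables) restores unitarity but destroys the exact power expansion, so it does not close this gap.

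The paper closes it with an ingredient your proposal is missing: before defining $Z_{A/B}$, it invokes a variant of the orthogonalization lemma of Kempe and Vidick \cite{KV10} (Lemma \ref{projections}) to replace $\{P_A^{(k)}\},\{P_B^{(k)}\}$ by \emph{exactly} orthogonal projections $\{\tilde P_A^{(k)}\},\{\tilde P_B^{(k)}\}$ with $\|(P_D^{(k)}-\tilde P_D^{(k)})\ket{\psi}\| = O(d^{1/2}\epsilon^{1/2})$, the bound being dimension-independent in the relevant sense. With exactly orthogonal projections, $Z_{A/B}$ is exactly unitary and $Z_A^k=\sum_j\omega^{jk}\tilde P_A^{(j)}+(\Id-\sum_j\tilde P_A^{(j)})$ holds as an operator identity, so the only errors in the swap computation are additive and come from the (translated) conditions \eqref{condition:YN-criterion-2}--\eqref{condition:YN-criterion-4}. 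Note also that the $\epsilon^{1/2}$ in the final bound originates precisely from this orthogonalization step; your error accounting (``$O(d^2\epsilon)$ or $O(d^{5/2}\sqrt{\epsilon})$'') never explains where a square root of $\epsilon$ would come from, which is another symptom of the missing lemma. With the orthogonalization step inserted at the front, the rest of your outline goes through essentially as you describe.
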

\vspace{5mm}

\noindent \textit{Proof of Lemma \ref{YNcriterion}.}
The proof is included in the Appendix.

\begin{proof}[Proof of Theorem \ref{thm4}]
We present the proof for the exact case, and then robustness is argued analogously to Theorem \ref{thm3}. Suppose a strategy of Alice and Bob achieves the ideal correlation of definition \ref{ideal measurements many questions finite case}. Let this be described by a joint state $\ps$ and projectors $\Pi_{A_{x}}^a$ ($\Pi_{B_{y}}^b$) corresponding to Alice (Bob) obtaining outcome $a$ $(b)$ on question $x$ $(y)$.
\paragraph{Construction of the projections and ``flip'' operators}
In this paragraph, we construct the projections of Lemma \ref{YNcriterion}, as well as ``flip'' operators $X^{\mathscr{u}}_{A,m}, X^{\mathscr{u}}_{B,m}, Y^{\mathscr{u}}_{A,m}$ and $Y^{\mathscr{u}}_{B,m}$, satisfying equations \eqref{eq21} and \eqref{flip2}. The name refers to the fact that, informally, they ``flip'' a projection $P^{(2m+1)}_A$ to $P^{(2m)}_A$ and $P^{(2m+2)}_A$ to $P^{(2m+1)}_A$. We will then construct the unitaries from the conditions of Lemma \ref{YNcriterion} as appropriate alternating products of $X$'s and $Y$'s.
Define 
\begin{align}
\hat{A}_{m,Z} &=  \Pi_{A_{(m,Z)}}^0 - \Pi_{A_{(m,Z)}}^1, 
&\hat{B}_{m,Z} &=  \Pi_{B_{(m,Z)}}^0 - \Pi_{B_{(m,Z)}}^1,
\\\hat{A}_{(m,X)} &=  \Pi_{A_{(m,X)}}^0 - \Pi_{A_{(m,X)}}^1\text{, and} 
&\hat{B}_{m,X} &=  \Pi_{B_{(m,X)}}^0 - \Pi_{B_{(m,X)}}^1.
\end{align}
Then, let
 \begin{align}
 \Id_{A_{m,W}}:=\Pi_{A_{(m,W)}}^0 + \Pi_{A_{(m,W)}}^1= (\Ah_{m,W})^2 , 
 &&
 \Id_{B_{m,W}}:=\Pi_{B_{(m,W)}}^0 + \Pi_{B_{(m,W)}}^1 =(\Bh_{m,W})^2 
 \end{align}
  for $W \in \{Z,X\}$.

From property (i) of the ideal correlation from Definition \ref{correlations many questions finite case}, we obtain
\begin{align}
\|\Pi_{A_{(m,Z)}}^0 \ps\| &= \sqrt{\average{\psi| \Pi_{A_{(m,Z)}}^0| \psi}} = \sqrt{\average{\psi| \Pi_{A_{(m,Z)}}^0 \cdot \sum_{b=0}^{1}\Pi_{B_{(m,Z)}}^b |\psi}} 
\\&= \sqrt{c_{2m}^2\cos^2{(\frac{\mu_m}{2})} + c_{2m}^2\sin^2{(\frac{\mu_m}{2})}} = c_{2m}, 
\end{align}
and similarly $\|\Pi_{A_{(m,Z)}}^1 \ps\| =  c_{2m+1}$. With similar other calculations we deduce that
\begin{equation}
\label{id_norms}
\|\Id_{A_{m,W}} \ps \| = \|\Id_{B_{m,\tilde{W}}} \ps\| = \sqrt{c_{2m}^2 + c_{2m+1}^2}\text{ for $W,\tilde{W} \in \{Z,X\}\,$.}
\end{equation}
Moreover, notice that $\average{\psi| \Id_{A_{m,W}}\Id_{B_{m,\tilde{W}}}| \psi} = c_{2m}^2 + c_{2m+1}^2 =  \|\Id_{A_{m,W}}\ps \| \cdot \|\Id_{B_{m,\tilde{W}}}\ps \|$. Hence, by Cauchy-Schwarz, it must be the case that \begin{equation}
\label{id_equalities}
\Id_{A_{m,W}} \ps = \Id_{B_{m,\tilde{W}}}  \ps \,\,\,\, \mbox{for $W,\tilde{W} \in \{Z,X\}\,$}.
\end{equation}

By design, property (i) of Definition \ref{correlations many questions finite case} implies that 
\begin{equation}
\bra{\psi}\alpha_m\hat{A}_{m,Z}+\hat{A}_{m,Z}\hat{B}_{m,Z}+\hat{A}_{m,Z}\hat{B}_{m,X}+\hat{A}_{m,X}\hat{B}_{m,Z}-\hat{A}_{m,X}\hat{B}_{m,X}\ket{\psi}=\sqrt{8+2\alpha_m^2} \cdot (c_{2m}^2 + c_{2m+1}^2) \label{almost_tilted_chsh}
\end{equation}
where $\alpha_m=\frac{2}{\sqrt{1+2\tan^2{(2\theta_m)}}}$. As such, this is not a maximal violation of the tilted CHSH inequality (since $\ps$ has unit norm). We get around this by defining the normalized state $\ket{\psi_m}=\frac{\Id_{A_{m,Z}}\ket{\psi}}{\sqrt{c_{2m}^2+c_{2m+1}^2}}$. By \eqref{id_equalities},
\begin{align}
\hat{A}_{m,W} \ps &= \hat{A}_{m,W} \Id_{A_{m,W}} \ps = \hat{A}_{m,W}\Id_{A_{m,Z}}  \ps\text{, and}
\\
\hat{B}_{m,W} \ps &= \hat{B}_{m,W} \Id_{B_{m,W}} \ps = \hat{B}_{m,W}\Id_{A_{m,Z}} \ps.
\end{align}
Then \eqref{almost_tilted_chsh} implies
\begin{equation}
\label{eq16s}
\bra{\psi_m}\alpha_m\hat{A}_{m,Z}+\hat{A}_{m,Z}\hat{B}_{m,Z}+\hat{A}_{m,Z}\hat{B}_{m,X}+\hat{A}_{m,X}\hat{B}_{m,Z}-\hat{A}_{m,X}\hat{B}_{m,X}\ket{\psi_m}=\sqrt{8+2\alpha_m^2}.
\end{equation}
Define unitaries $\hat{A}^{\mathscr{u}}_{m,W} := \Id-\Id_{A_{m,W}} + \hat{A}_{m,W}$ and $\hat{B}^{\mathscr{u}}_{m,W} := \Id-\Id_{B_{m,W}} + \hat{B}_{m,W}$ for $W \in \{Z,X\}$. We think of these as the ``unitarized'' versions of the operators in \eqref{eq16s}. It is clear that equation \eqref{eq16s} holds also with the unitarized operators. Now let $Z^{\mathscr{u}}_{A,m} := \hat{A}^{\mathscr{u}}_{m,Z}$, $X^{\mathscr{u}}_{A,m} := \hat{A}^{\mathscr{u}}_{m,X}$. Then, let $Z^*_{B,m}$ and $X^*_{B,m}$ be $\frac{\Bh^{\mathscr{u}}_{m,Z}+ \Bh^{\mathscr{u}}_{m,X}}{2\cos(\mu_m)}$ and $\frac{\Bh^{\mathscr{u}}_{m,Z}- \Bh^{\mathscr{u}}_{m,X}}{2\sin(\mu_m)}$ respectively, but with all $0$ eigenvalues replaced by $1$. Define $Z^{\mathscr{u}}_{B,m} = Z^*_{B,m}|Z^*_{B,m}|^{-1}$ and $X^{\mathscr{u}}_{B,m} = X^*_{B,m}|X^*_{B,m}|^{-1}$ (this is again a required unitarization step). Then, by Lemma \ref{Bamps lemma}, the above maximal violation of the tilted-CHSH inequality implies that
\begin{align}
Z^{\mathscr{u}}_{A,m} \ket{\psi_m} &= Z^{\mathscr{u}}_{B,m} \ket{\psi_m}\text{, and } \label{44}\\
X^{\mathscr{u}}_{A,m} (\Id - Z^{\mathscr{u}}_{A,m}) \ket{\psi_m} &= \tan(\theta_m) X^{\mathscr{u}}_{B,m} (\Id + Z^{\mathscr{u}}_{A,m}) \ket{\psi_m}. \label{52}
\end{align}


Define the subspace $\mathcal{B}_m=\mbox{range}(\Id_{B_{m,Z}}) + \mbox{range}(\Id_{B_{m,X}})$, and the projection $\Id_{\mathcal{B}_m}$ onto subspace $\mathcal{B}_m$. Notice that $Z^{\mathscr{u}}_{B,m} = \Id-\Id_{\mathcal{B}_m} + \tilde{Z}_{B,m}$, where $\tilde{Z}_{B,m}$ is some operator supported only on subspace $\mathcal{B}_m$. This implies that $Z^{\mathscr{u}}_{B,m} \ket{\psi_m} =  \tilde{Z}_{B,m} \ket{\psi_m} = \tilde{Z}_{B,m} \ps$, where we have used \eqref{id_equalities} and the fact that 
\begin{align}\label{onesb}
    \Id_{B_{m,Z}}  \ps = \Id_{B_{m,X}}  \ps
    &&\text{implies}
    && \Id_{\mathcal{B}_m} \ket{\psi} = \Id_{B_{m,W}}\ket{\psi},
    && W \in \{Z,X\}.
\end{align}

Hence, from \eqref{44} we deduce that $\hat{A}_{m,Z} \ps= \tilde{Z}_{B,m} \ps$. For $m \in \{0,1,...\frac{d}{2}-1\}$, define projections 
\begin{align}
P_A^{(2m)} &:= (\Id_{A_{m,Z}} + \hat{A}_{m,Z})/2 = \Pi_{A_{(m,Z)}}^0,
&P_B^{(2m)} &:= (\Id_{\mathcal{B}_m} + \tilde{Z}_{B,m})/2,
\\P_A^{(2m+1)} &:= (\Id_{A_{m,Z}}  - \hat{A}_{m,Z})/2 = \Pi_{A_{(m,Z)}}^1,
&P_B^{(2m+1)} &:= (\Id_{\mathcal{B}_m} - \tilde{Z}_{B,m})/2. 
\end{align}

Note that $P_B^{(2m)},P_B^{(2m+1)}$ are indeed projections, since $\tilde{Z}_{B,m}$ has all $\pm 1$ eigenvalues corresponding to subspace $\mathcal{B}_m$, and is zero outside. We also have, for all $m$ and $k=2m,2m+1$,
\begin{align}
\label{eq16}
P_A^{(k)} \ps = (\Id_m^{A_0} +(-1)^k \hat{A}_{m,Z})/2 \ps &=  (\Id_m^{B_0} +(-1)^k \hat{A}_{m,Z})/2 \ps \nonumber\\
&= (\Id_{\mathcal{B}_m} +(-1)^k \tilde{Z}_{B,m})/2 \ps = P_B^{(k)} \ps
\end{align}
Further, notice that 
\begin{equation}
(\Id +(-1)^k Z^{\mathscr{u}}_{A,m}) \ket{\psi_m} =  (\Id_m^{A_0} +(-1)^k\hat{A}_{0,m}) \ket{\psi_m} = (\Id_m^{A_0} +(-1)^k \hat{A}_{0,m}) \ps = P_A^{(k)}\ps. 
\end{equation}

Combining with Equation \eqref{52} gives
\begin{equation}
\label{eq21}
X^{\mathscr{u}}_{A,m} P_A^{(2m+1)} \ps = \tan(\theta_m) X^{\mathscr{u}}_{B,m} P_A^{(2m)} \ps  = \frac{c_{2m+1}}{c_{2m}} X^{\mathscr{u}}_{B,m} P_A^{(2m)} \ps.\\
\end{equation}
\vspace{5mm}

Now, we can repeat an analogous procedure but starting from property (ii) of the ideal correlation from Definition \ref{correlations many questions finite case}, to deduce the existence of unitary operators $Y^{\mathscr{u}}_{A,m}, Y^{\mathscr{u}}_{B,m},$ satisfying 
\begin{equation}
\label{eq22}
Y^{\mathscr{u}}_{A,m} \Pi_{A_{(m,Z)}}^2 \ps = \tan(\theta'_m) Y^{\mathscr{u}}_{B,m} P_A^{(2m+1)} \ps  = \frac{c_{2m+2}}{c_{2m+1}} Y^{\mathscr{u}}_{B,m} P_A^{(2m+1)} \ps\\
\end{equation}
Notice, importantly, that the LHS involves $\Pi_{A_{(m,Z)}}^2$, and not $P_A^{(2m+2)}= \Pi_{A_{(m+1,Z)}}^0$. We would like to replace $\Pi_{A_{(m,Z)}}^2$ with $P_A^{(2m+2)}$, and for this we need property (iii) of Definition \ref{correlations many questions finite case}. This tells us that, for $m=0,..,\frac{d}{2}-2$,
\begin{align}
\label{eq23}
\sandwich{\psi}{\Pi_{A_{(m,Z)}}^2}{\psi}
= \sandwich{\psi}{\Pi_{A_{(m+1,Z)}}^0}{\psi} 
= \sandwich{\psi}{\Pi_{B_{(m+1,\text{Aux})}}^0}{\psi} 
&= c_{2m+2}^2\text{, and}
\\
\label{eq24}
\sandwich{\psi}{\Pi_{A_{(m,Z)}}^2\Pi_{B_{(m+1,\text{Aux})}}^0}{\psi} 
= \sandwich{\psi}{\Pi_{A_{(m+1,Z)}}^0\Pi_{B_{(m+1,\text{Aux})}}^0}{\psi} 
&=  c_{2m+2}^2.
\end{align}
\eqref{eq23} and \eqref{eq24} imply, with an application of Cauchy-Schwarz, that 
\begin{equation}
\label{eq25}
\Pi_{A_{(m,Z)}}^2 \ps = \Pi_{B_{(m+1,\text{Aux})}}^0 \ps= \Pi_{A_{(m+1,Z)}}^0 \ps 
\end{equation}
i.e. $\Pi_{A_{(m,Z)}}^2 \ps = P^{(2m+2)}_A \ps$. Plugging this into \eqref{eq22} gives 
\begin{equation}
\label{flip2}
Y^{\mathscr{u}}_{A,m} P^{(2m+2)}_A \ps = \frac{c_{2m+2}}{c_{2m+1}} Y^{\mathscr{u}}_{B,m} P_A^{(2m+1)} \ps\\
\end{equation}
\paragraph{The projections satisfy the Yang--Navascu\'es criterion}
So far, we have constructed sets of projections $P_{A}^{(k)},P_{B}^{(k)}$ for which $P_A^{(k)} \ps = P_B^{(k)}\ps$.
We also need them to satisfy conditions \eqref{condition:YN-criterion-1} and \eqref{condition:YN-criterion-2}. For this, we use property (iv) from Definition \ref{correlations many questions finite case}, which reads
\begin{equation}
\label{eq51}
\forall a,b \in \{0,1\}, \forall m \neq m', \,\,\, \,\bra{\psi}\Pi_{A_{(m,Z)}}^a\Pi_{B_{(m',\text{Aux})}}^b\ps = 0
\end{equation}
First, note that $P_A^{(2m)}P_A^{(2m+1)}\ps = 0 \,\, \forall m$. Moreover, \eqref{eq51} implies that for $k = 2m, k'= 2m'$, with $m\neq m'$, we have
\begin{align}
\|P_A^{(k)}P_A^{(k')}\ps \|^2 &= \bra{\psi}\Pi_{A_{(m,Z)}}^0\Pi_{A_{(m',Z)}}^0\Pi_{A_{(m,Z)}}^0 \ps \\
&= \bra{\psi}\Pi_{B_{(m,\text{Aux})}}^0\Pi_{A_{(m',Z)}}^0\Pi_{A_{(m,Z)}}^0 \ps \\
&= \bra{\psi}\Pi_{A_{(m',Z)}}^0\Pi_{A_{(m,Z)}}^0 \ps \\
&= \bra{\psi}\Pi_{B_{(m',\text{Aux})}}^0\Pi_{A_{(m,Z)}}^0 \ps = 0
\end{align}
where to get the second line we used \eqref{eq25}. The proof is analogous for the other cases of \mbox{$k\in\{2m,2m+1\},$} \mbox{$k' \in \{2m',2m'+1\}$} with $m\neq m'$. Hence $P_A^{(k)}P_A^{(k')}\ps = 0$ for all $k \neq k'$, as desired. Condition \eqref{condition:YN-criterion-2} follows easily from condition \eqref{condition:YN-criterion-1}.

\paragraph{Construction of the unitaries}
Finally, to complete the proof of self-testing we just need to construct unitary operators satisfying condition \eqref{condition:YN-criterion-4} of Lemma \ref{YNcriterion}. These operators are obtained in exactly the same way as in \cite{CGS16} as appropriate alternating products of the $X^{\mathscr{u}}_{A,m}$, $Y^{\mathscr{u}}_{A,m}$ for Alice and of the $X^{\mathscr{u}}_{B,m}$, $Y^{\mathscr{u}}_{B,m}$ for Bob. We hence refer the reader to \cite{CGS16} for the last part of the proof. The claimed robustness bound is straightforward, and follows in the same way as for Theorem \ref{thm3}.
\end{proof}

\subsection{Extension to infinite question sets and proof of Theorem \ref{thm2}}
In this section, we complete the proof of Theorem \ref{thm2}.
The ideal correlations of Definition \ref{ideal measurements many questions finite case} are extended to infinite questions sets, with the same answer sets, just as one would expect. Let $\mathcal{X}_{\infty} = \mathbb{N} \times \{Z,X,X'\}$, $\mathcal{Y}_{\infty} = \mathbb{N} \times \{Z,Z',X',\text{Aux}\}$, $\mathcal{A} = \{0,1,2,\perp\}$, $\mathcal{B} = \{0,1,\perp\}$. 
\begin{definition}(Separating correlation, many questions)
The separating correlation in the \textit{many questions} case is the correlation $p^{*}_{\infty}$ on questions sets $\mathcal{X}_{\infty}$, $\mathcal{Y}_{\infty}$ and answer sets $\mathcal{A}, \mathcal{B}$, achieved on the joint state $\ket{\Psi_{\infty}}:= C \cdot \sum_{i=0}^{\infty} \frac{1}{(i+1)^8}\ket{ii}$ (i.e $c_i = C \cdot \frac{1}{(i+1)^8}$, where $C$ is the normalizing constant from section \ref{2.2}), with the ideal measurements of Definition \ref{ideal measurements many questions finite case}, except with $m$ ranging in $\mathbb{N}$.
\end{definition}

Just as in the infinite answers sets case of Section \ref{2.2}, we can also view $p^{*}_{\infty}$ as a limit of ideal correlations from Definition \ref{ideal measurements many questions finite case}, provided we modify the notion of lift, as we do below. 

\begin{proof}[Proof of Theorem \ref{thm2}]

The proof of Theorem \ref{thm2} follows in a very similar fashion to the proof of Theorem \ref{thm1}, making use of the new self-testing result of Theorem \ref{thm4}. So, we highlight just where it differs.

We introduce, first, a modifed notion of \textit{lift}. This time, for any quantum correlation $\{p(a,b|x,y)\}_{x,y}$ on finite question and answer sets $\mathcal{X} = \{0,1,...\frac{d}{2}-1\} \times\{Z,X,X'\}$, $\mathcal{Y} = \{0,\ldots ,\frac{d}{2}-1\} \times\{Z,Z',X', \mbox{Aux}\}$, $\mathcal{A} = \{0,1,2, \perp\}$, $\mathcal{B} = \{0,1, \perp\}$, we define its lift to countably infinite question sets as follows. Given a quantum strategy producing $\{p(a,b|x,y)\}_{x,y}$, the lift is the correlation $\{\hat{p}(a,b,x,y)\}_{x,y}$ on the same answer sets, but question sets $\mathcal{X}_{\infty} = \mathbb{N} \times \{Z,X,X'\}$, $\mathcal{Y}_{\infty} = \mathbb{N} \times \{Z,Z',X',\text{Aux}\}$, obtained by the same quantum strategy, except that when Alice or Bob receive a question in $\mathcal{X}_{\infty} \setminus \mathcal{X}$ and $\mathcal{Y}'_{\infty} \setminus \mathcal{Y}$ respectively, they simply output ``$\perp$" with probability 1.

Now, we define $p^*_N$ to be the ideal correlation from Definition \ref{ideal measurements many questions finite case}, specifically the one obtained on the state $\ket{\Psi_N}= C_N \cdot \sum_{i=0}^{N-1} \frac{1}{(i+1)^8} \ket{ii}$, where $C_N$ is the normalizing constant from section \ref{2.2}, and we let $\mathcal{X}_N, \mathcal{Y}_N$ be the corresponding sets of questions. Then, it is again easy to see (in a similar fashion to Claim \ref{claim1}) that there exists a function $\epsilon(N) = \alpha N^{-16}$, for some constant $\alpha$, such that 
\begin{equation}
|\hat{p}^*_N - 
p^*_{\infty}|_{corr} \leq \epsilon(N)
\end{equation}
Now, we follow through with the same argument and notation. 

Suppose $p \in \mathcal{C}_{q \leq N'}^{\infty, \infty, 4, 3}$ and $|p-p^*_{\infty}|_{corr} = \delta$, for some $\delta > 0$. Then, by a triangle inequality, $|p-\hat{p}^*_N|_{corr} \leq \epsilon(N) + \delta$. Given an $N'$-dimensional quantum strategy achieving $p$, we define $\hat{p}_N$ to be the correlation obtained with the same strategy, except that when Alice or Bob receive a question in $\mathcal{X}_{\infty} \setminus \mathcal{X}_N$ and $\mathcal{Y}_{\infty} \setminus \mathcal{Y}_N$ respectively, they simply output ``$\perp$" with probability 1 (and naturally denote by $p_N$ the correlation on question sets $\mathcal{X}_N$ and $\mathcal{Y}_N$ whose lift is $\hat{p}_N$).
We have,
\begin{align}
\label{eq45}
\forall (x,y) \in \left(\mathcal{X}_{\infty} \setminus \mathcal{X}_N\right) \times \mathcal{Y}_{\infty},  \sum_{(a,b):a \neq \perp} p(a,b|x,y) &= O\big(\epsilon(N)+\delta \big), \\ \label{eq46} \forall (x,y) \in \mathcal{X}_{\infty} \times \left(\mathcal{Y}_{\infty} \setminus \mathcal{Y}_N\right), \sum_{(a,b):b \neq \perp} p(a,b|x,y) & = O\big(\epsilon(N)+\delta \big)
\end{align}
Then, \eqref{eq45} and \eqref{eq46} imply $|p_N  - p^*_N|_{corr} = O\big(\epsilon(N) + \delta\big)$. The rest of the proof mimics part (ii) and (iii) of the proof of Theorem \ref{thm1}.

\end{proof}

\section{Conclusion and open questions}
In conclusion, we have shown separation of finite and infinite-dimensional quantum correlations when one allows for either infinite answer sets or infinite question sets. The proof of the former relies on an extension of the self-testing result from \cite{CGS16} to infinite answer sets. The proof of the latter relies on a novel self-test for any pure bipartite entangled state of local dimension $d$, with question sets of size $O(d)$ and answer sets of size $4$ and $3$ respectively. 

The following are two interesting and related open questions.
\begin{enumerate}[(i)]
\item The major related open question is still, of course, whether the containment $\mathcal{C}_q \subseteq \mathcal{C}_{qs}$ is strict, or the two sets are equal. Proving the conjecture \cite{I3322Pal} that maximal violation of the $I_{3322}$ Bell inequality \cite{I3322Froissart} is attained by an infinite-dimensional quantum state, and not any finite-dimensional one, would imply that  $\mathcal{C}_q\neq \mathcal{C}_{qs}$. On the other hand, it is also possible that correlations violating maximally $I_{3322}$ lie in $\mathcal{C}_{qa}$ but not in $\mathcal{C}_{qs}$ (as Slofstra has shown that $\mathcal{C}_{qs} \neq \mathcal{C}_{qa}$), and that in fact $\mathcal{C}_q = \mathcal{C}_{qs}$. Partial progress in the latter direction would amount to showing, for example, that when one restricts to certain small sizes of questions and answer sets, finite and infinite-dimensional quantum correlations are the same. 

\item Another open question that emerged during this work is whether infinite-dimensional states can be self-tested (with possibly infinite-sized question or answer sets). We suspect that the answer is yes, and that in fact the correlations on infinite question or answer sets that we presented in this work self-test their ideal state. However, the usual self-testing proof techniques don't work in infinite dimensions, because objects like the discrete Fourier transform (and hence the ``swap'' isometry) and the maximally entangled state are not defined. 

\end{enumerate}

\section*{Acknowledgements}
The authors thank Matteo Lostaglio, Martino Lupini, Michal Oszmaniec, William Slofstra and Thomas Vidick for helpful discussions. The authors also thank Thomas Vidick for valuable comments on earlier versions of this paper. A.C.\ is supported by AFOSR YIP award number FA9550-16-1-0495. J.S.\ is supported by NSF CAREER Grant CCF-1553477.

\bibliographystyle{alpha}
\bibliography{references}

\newpage

\appendix
\numberwithin{equation}{section}

\section{Proof of Lemma \ref{YNcriterion}}
\label{YNproof}
We provide a proof of Lemma \ref{YNcriterion}. We will explicitly construct a local isometry $\Phi$ such that $\| \Phi(\ket{\psi})-{\text{extra}}\otimes\ket{\Psi} \| = O(d^{\frac52}\epsilon^{\frac12})$, where $\ket{\Psi}=\sum_{i=0}^{d-1} c_i \ket{ii}$ with $0 < c_i < 1$ for all $i$ and $\sum_{i=0}^{d-1}c_i^2 = 1$. 

\begin{proof}
The first step is to obtain \textit{exactly} orthogonal projections from the $\{P_A^{(k)}\}, \{P_B^{(k)}\}$, which are approximately orthogonal, and only when acting on $\ket{\psi}$, from condition \eqref{condition:YN-criterion-1}. We invoke a slight variation of the \textit{orthogonalization lemma} (Lemma 21) from Kempe and Vidick \cite{KV10}. 
\begin{lem}
\label{projections}
Let $\rho$ be positive semi-definite, living on a finite-dimensional Hilbert space. Let $P_1,..,P_k$ be projections such that 
\begin{equation}
\sum_{i\neq j} \tr(P_iP_jP_i\rho) \leq \epsilon
\end{equation}
for some $0<\epsilon \leq Tr(\rho)$. Then there exist orthogonal projections $Q_1,..,Q_k$ such that
\begin{equation}
\sum\limits_{i=1}^{k} \tr\big((P_i-Q_i)^2\rho\big) =O\big(\epsilon^{\frac12}\big)\tr(\rho)^{\frac12}
\end{equation}

\end{lem}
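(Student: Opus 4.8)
The plan is to follow the orthogonalization argument of Kempe--Vidick \cite{KV10}, carrying the weight $\rho$ throughout. Write $\sigma := \rho^{1/2}$ and note that, since each $P_i$ is a projection, $\tr(P_iP_jP_i\rho) = \tr\big((P_jP_i\sigma)^\dagger (P_jP_i\sigma)\big) = \|P_jP_i\sigma\|_{\mathrm{HS}}^2$, so the hypothesis says exactly $\sum_{i\neq j}\|P_jP_i\sigma\|_{\mathrm{HS}}^2 \le \epsilon$. From this I would first extract control on the pairwise overlaps: $|\tr(P_iP_j\rho)| = |\langle \sigma,\, P_iP_j\sigma\rangle_{\mathrm{HS}}| \le \|\sigma\|_{\mathrm{HS}}\,\|P_iP_j\sigma\|_{\mathrm{HS}} \le \sqrt{\tr\rho}\cdot\sqrt{\epsilon}$ for $i \neq j$, and, applying Cauchy--Schwarz once over the (at most $k^2$) pairs, $\sum_{i\neq j}|\tr(P_iP_j\rho)| = O\big(k\sqrt{\epsilon\,\tr\rho}\big)$. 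The point of spending the $\epsilon$-budget in a single Cauchy--Schwarz, rather than pair-by-pair, will be essential later.

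Next, form the positive operator $S := \sum_{i=1}^{k}P_i$. Since $S^2 - S = \sum_{i\neq j}P_iP_j$, the previous step gives $|\tr\big((S^2-S)\rho\big)| = O\big(k\sqrt{\epsilon\,\tr\rho}\big)$; spectrally this should say that the $\rho$-weight of $S$ sitting on the part of its spectrum bounded away from $\{0,1\}$ is $O(\mathrm{poly}(k)\sqrt{\epsilon\,\tr\rho})$. Making this quantitative requires a bound on $\tr(S^2\rho)$ --- available either from the crude $\tr(S\rho) = \sum_i\tr(P_i\rho) \le k\,\tr\rho$ combined with the overlap estimate, or, in the application of interest, from the accompanying near-completeness $\sum_i P_i \approx \mathbf{1}$ on $\rho$ (cf.\ condition \eqref{condition:YN-criterion-2}) --- after which a short spectral-calculus estimate (of the type $(\lambda - \mathbf{1}[\lambda\ge\tfrac12])^2 \le |\lambda^2-\lambda|$ on the relevant spectrum) yields $\tr\big((\Pi - S)^2\rho\big) = O\big(\mathrm{poly}(k)\sqrt{\epsilon\,\tr\rho}\big)$ for $\Pi := \mathbf{1}[S \ge \tfrac12]$. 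Consequently the operators $\tilde P_i := \Pi P_i\Pi$ satisfy $\sum_i\tr\big((P_i - \tilde P_i)^2\rho\big)$ small, $0 \preceq \tilde P_i \preceq \Pi$, $\sum_i\tilde P_i \approx \Pi$ on $\rho$, and $\sum_{i\neq j}\|\tilde P_j\tilde P_i\sigma\|_{\mathrm{HS}}^2$ small. Inside $\mathrm{range}(\Pi)$ I would then round the $\tilde P_i$ simultaneously to mutually orthogonal projections $Q_i \preceq \Pi$ (a L\"owdin-/Gram--Schmidt-type symmetric orthogonalization, performed so that the aggregate overlap $\sum_{i\neq j}\|P_jP_i\sigma\|_{\mathrm{HS}}^2\le\epsilon$ is invoked globally), and take these $Q_i$ as the desired projections; the final estimate then follows from one more triangle inequality combining $\|(P_i - \tilde P_i)\sigma\|_{\mathrm{HS}}$ with $\|(\tilde P_i - Q_i)\sigma\|_{\mathrm{HS}}$ and summing over $i$.

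The main obstacle is exactly the error bookkeeping in the orthogonalization. A naive greedy Gram--Schmidt --- build $Q_i$ by removing the ranges of $Q_1,\dots,Q_{i-1}$ from $P_i$ --- has the defect that the error at step $i$ is bounded in terms of the errors at all earlier steps, which makes it roughly double each time and hence blow up exponentially in $k$. The argument must therefore be organized ``globally'', spending the single budget $\sum_{i\neq j}\tr(P_iP_jP_i\rho)\le\epsilon$ only once (turning it, via one Cauchy--Schwarz over the index set, into a $\mathrm{poly}(k)\cdot\sqrt{\epsilon}$ aggregate error); this is precisely why one passes through $\Pi$ and orthogonalizes all the $\tilde P_i$ at once rather than one at a time. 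Two further technical points to watch: the hypothesis is asymmetric in $i$ and $j$, so any symmetric quantity (e.g.\ $\tr(P_iP_j\rho)$, or the negative part of $S^2-S$) must be bounded by inserting an extra projector plus a Cauchy--Schwarz; and finite-dimensionality is used freely, both for the spectral calculus of $S$ and for the orthogonalization. All polynomial-in-$k$ factors (here $k=d$) are absorbed into the $O(\cdot)$, consistently with the $O(d^{5/2}\epsilon^{1/2})$ bound claimed in Lemma \ref{YNcriterion}.
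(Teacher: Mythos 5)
First, a point of comparison: the paper does not actually prove Lemma \ref{projections} --- it is imported as ``a slight variation of the orthogonalization lemma (Lemma 21) from Kempe and Vidick \cite{KV10}'', so the only content on the paper's side is that the variation (a general positive semi-definite $\rho$ in place of a pure state, with the $\tr(\rho)^{1/2}$ scaling) follows from the same argument. Your proposal instead tries to reconstruct that argument, and the reconstruction has a genuine gap exactly where the content of the lemma lies: the simultaneous rounding of the operators $\tilde P_i = \Pi P_i \Pi$ into mutually orthogonal \emph{projections} $Q_i$ with a controlled aggregate loss is asserted (``a L\"owdin-/Gram--Schmidt-type symmetric orthogonalization \ldots invoked globally'') but never carried out. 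You correctly diagnose why greedy Gram--Schmidt fails (errors compound across steps), but the proposed replacement is a black box: the $\tilde P_i$ are not even idempotent, and no argument is given for why some family of exactly orthogonal projections exists whose total $\rho$-weighted distance to the $P_i$ is bounded by a single global use of the hypothesis. The preprocessing you do perform ($S=\sum_i P_i$, the cutoff $\Pi=\mathbf{1}[S\ge \tfrac12]$, the pointwise bound $(\lambda-\mathbf{1}[\lambda\ge\tfrac12])^2\le|\lambda^2-\lambda|$) is fine, modulo the issue you yourself flag: you need $\tr(|S^2-S|\,\rho)$ rather than $|\tr((S^2-S)\rho)|$, which you can obtain from $\tr(|A|\rho)\le\sqrt{\tr(A^2\rho)\,\tr(\rho)}$ applied to $A=S^2-S=\sum_{i\ne j}P_iP_j$. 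But none of this produces the $Q_i$.

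Second, even granting the missing rounding step, your bookkeeping targets a quantitatively weaker statement than Lemma \ref{projections}. Each Cauchy--Schwarz over the $O(k^2)$ index pairs costs a factor of $k$, whereas the lemma claims $O(\epsilon^{1/2})\tr(\rho)^{1/2}$ with \emph{no} dependence on $k$ or on the dimension --- a feature the paper explicitly emphasizes right after the statement. This is not cosmetic: the lemma is invoked with $\epsilon$ replaced by $d(d-1)\epsilon^2$ and $k=d$, and the $k$-independence is what gives $\|(P_D^{(i)}-\tilde P_D^{(i)})\ket{\psi}\|=O(d^{1/2}\epsilon^{1/2})$ and hence the $O(d^{5/2}\epsilon^{1/2})$ bound of Lemma \ref{YNcriterion}; extra $\mathrm{poly}(d)$ losses would degrade that bound and propagate into the exponents of Theorems \ref{thm1} and \ref{thm2}. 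So the polynomial-in-$k$ factors cannot simply be ``absorbed into the $O(\cdot)$''; avoiding them is part of what the Kempe--Vidick argument accomplishes, and your outline departs from it at precisely that point.
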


Note that, importantly, the bound doesn't depend on the dimension of the underlying Hilbert spaces. And it also doesn't depend on the number of projections.

We apply the above Lemma to our projections $\{P_A^{(k)}\}, \{P_B^{(k)}\}$. From condition \eqref{condition:YN-criterion-1}, we have $\sum_{i\neq j}\tr\big(P_A^{(i)}P_A^{(j)}P_A^{(i)}\ket{\psi}\bra{\psi}\big) \leq d(d-1)\epsilon^2$, and similarly for $B$ up to a constant factor, thanks to \eqref{condition:YN-criterion-3} and triangle inequalities. Let $\{\tilde{P}_A^{(k)}\}, \{\tilde{P}_B^{(k)}\}$ be the new sets of orthogonal projections obtained from Lemma \ref{projections}. Then, we have $\sum\limits_{i=0}^{d-1} \tr\big((P_D^{(i)}-\tilde{P}_D^{(i)})^2\ket{\psi}\bra{\psi}\big) =O\big(d\epsilon\big)$, for $D \in \{A,B\}$. This immediately gives $\|(P_D^{(i)}-\tilde{P}_D^{(i)})\ket{\psi} \| =O\big(d^{\frac12}\epsilon^{\frac12}\big)$, for $i=0,..,d-1$, without seeking to optimize the bound further. By application of triangle inequalities, conditions \eqref{condition:YN-criterion-2}, \eqref{condition:YN-criterion-3} and \eqref{condition:YN-criterion-4} become, with the new projections,
\begin{align}
\|(\sum_k \tilde{P}_A^{(k)} - \mathds{1}) \ps \| &= O(d^{\frac{3}{2}}\epsilon^{\frac12}),  \\   
\| (\tilde{P}^{(k)}_{A} - \tilde{P}^{(k)}_{B} )\ket{\psi} \| &= O(d^{\frac{1}{2}}\epsilon^{\frac12}) \;\; \forall k,\\
\| (X^{(k)}_{A}X^{(k)}_{B}\tilde{P}^{(k)}_{A} - \frac{c_k}{c_0} \tilde{P}^{(0)}_{A}) \ket{\psi} \| &= O(d^{\frac{1}{2}}\epsilon^{\frac12}) \;\; \forall k,
\end{align}




Now, define $Z_{A/B} := \sum_{k=0}^{d-1} \omega^k \tilde{P}_{A/B}^{(k)} + \mathds{1} - \sum_k \tilde{P}_{A/B}^{(k)}$.
In particular, $Z_A$ and $Z_B$ are unitary. 

Define the local isometry
\begin{equation}
\Phi := (R_{AA'}\otimes R_{BB'})(\bar{F}_{A'}\otimes\bar{F}_{B'})(S_{AA'}\otimes S_{BB'})(F_{A'}\otimes F_{B'})
\end{equation}
where $F$ is the quantum Fourier transform, $\bar{F}$ is the inverse quantum Fourier transform, $R_{AA'}$ is defined so that $\ket{\phi}_{A} \ket{k}_{A'} \mapsto X^{(k)}_{A}\ket{\phi}_{A} \ket{k}_{A'}\,\,\, \forall \ket{\phi}$, and similarly for $R_{BB'}$, and $S_{AA'}$ is defined so that $\ket{\phi}_{A} \ket{k}_{A'} \mapsto Z^{k}_{A}\ket{\phi}_{A}\ket{k}_{A'} \,\,\, \forall \ket{\phi}$, and similarly for $S_{BB'}$. We compute the action of $\Phi$ on $\ket{\psi}_{AB}\ket{0}_{A'}\ket{0}_{B'}$. For ease of notation with drop the tildes, while still referring to the new orthogonal projections. We write $\ket{\psi} \approx_\epsilon \ket{\psi'}$ to mean $\|\ket{\psi} - \ket{\psi'}\| \leq \epsilon$. 
\begingroup
\allowdisplaybreaks
\begin{align}
\ket{\psi}_{AB}\ket{0}_{A'}\ket{0}_{B'} \stackrel{F_{A'}\otimes F_{B'}}{\longrightarrow}& \frac{1}{d}\sum_{k,k'}\ket{\psi}_{AB}\ket{k}_{A'}\ket{k'}_{B'} \label{A3} \\
\stackrel{S_{AA'}\otimes S_{BB'}}{\longrightarrow}& \frac{1}{d}\sum_{k,k'}\left(\sum_{j}\omega^{j}P^{(j)}_{A} + \mathds{1} - \sum_j P_{A}^{(j)}\right)^{k}\left(\sum_{j'}\omega^{j'}P^{(j')}_{B} + \mathds{1} - \sum_j' P_{B}^{(j')}\right)^{k'}\ket{\psi}_{AB}\ket{k}_{A'}\ket{k'}_{B'} \label{A3}\\
&\approx_{O(d^{\frac52}\epsilon^{\frac12})} \frac{1}{d}\sum_{k,k',j,j'}\omega^{jk}\omega^{j'k'}P^{(j)}_{A}P^{(j')}_{B}\ket{\psi}_{AB}\ket{k}_{A'}\ket{k'}_{B'}\\
&\approx_{O(d^{\frac52}\epsilon^{\frac12})}  \frac{1}{d}\sum_{k,k',j,j'}\omega^{jk}\omega^{j'k'}P^{(j)}_{A}P^{(j')}_{A}\ket{\psi}_{AB}\ket{k}_{A'}\ket{k'}_{B'}\\
&= \frac{1}{d}\sum_{k,k',j}\omega^{j(k+k')}P^{(j)}_{A}\ket{\psi}_{AB}\ket{k}_{A'}\ket{k'}_{B'}\\
\stackrel{\bar{F}_{A'}\otimes \bar{F}_{B'}}{\longrightarrow}&\frac{1}{d^2}\sum_{k,k',j,l,l'}\omega^{j(k+k')}\omega^{-lk}\omega^{-l'k'}P^{(j)}_{A}\ket{\psi}_{AB}\ket{l}_{A'}\ket{l'}_{B'}\\
&=\frac{1}{d^2}\sum_{k,k',j,l,l'}\omega^{k(j-l)}\omega^{k'(j-l')}P^{(j)}_{A}\ket{\psi}_{AB}\ket{l}_{A'}\ket{l'}_{B'}\\
&= \sum_{j}P^{(j)}_{A}\ket{\psi}_{AB}\ket{j}_{A'}\ket{j}_{B'}\label{A12}\\
\stackrel{R_{AA'}\otimes R_{BB'}}{\longrightarrow}& \sum_{j}X^{(j)}_{B}X^{(j)}_{A}P^{(j)}_{A}\ket{\psi}_{AB}\ket{j}_{A'}\ket{j}_{B'}\\
&\approx_{O(d^{\frac32}\epsilon^{\frac12})} \sum_{j}\frac{c_j}{c_0} P^{(0)}_{A}\ket{\psi}_{AB}\ket{j}_{A'}\ket{j}_{B'}\\
&= \frac{1}{c_0}P^{(0)}_{A}\ket{\psi}_{AB} \otimes \sum_{j} c_j \ket{j}_{A'}\ket{j}_{B'}\\
&= \ket{\text{extra}} \otimes \ket{\Psi} \label{A}
\end{align}
\endgroup
All in all, we have constructed a local isometry $\Phi$ such that 
\begin{equation}
\| \Phi(\ket{\psi}) - \ket{\text{extra}} \otimes \ket{\Psi}\| = O(d^{\frac52}\epsilon^{\frac12})
\end{equation}

Note that it is straightforward to check that the whole proof above can be repeated by starting from a mixed joint state, yielding a corresponding version of the Lemma that holds for a general mixed state.

\end{proof}

\section{Proof of Theorem \ref{thm3}}

The proof is mostly a matter of going through the proof for the exact case in \cite{CGS16} and checking that all equalities can be replaced by approximate equalities, making use of triangle inequalities. One then invokes Lemma \ref{YNcriterion}, i.e. the slightly more general and robust version of the self-testing criterion from \cite{Yang13}. We provide a sketch of the proof using the same notation as in \cite{CGS16}. We invite the interested reader to refer to \cite{CGS16}.

First, we clarify some jargon. We say that an equation $\ket{\psi} = \ket{\psi'}$ holds $\epsilon$-approximately, if $\|\ket{\psi} - \ket{\psi'}\| \leq \epsilon$, and we write $\ket{\psi} \approx_\epsilon \ket{\psi'}$. We will go through the proof in section 4 of \cite{CGS16}, pointing out where exact identities are replaced by approximate ones. From here on, we also refer to the equation numbering from section 4 of \cite{CGS16}.

First, notice that $\epsilon$-approximate correlations give us $ \big| \| \Pi^{A_0}_{2m} \ps \|^2 - c_{2m}^2 \big| =O(\epsilon)$. Similarly, $\big| \| \Pi^{A_0}_{2m+1} \ps \|^2 - c_{2m+1}^2 \big| =O(\epsilon)$. With similar other calculations, (10) becomes
\begin{align}
\label{id_norms approx}
\big| \|\mathds{1}_m^{A_i} \ps \|^2 - c_{2m}^2 - c_{2m+1}^2 \big| &= O(\epsilon) \,\,\,\,, i \in \{0,1\} \\ 
\big| \|\mathds{1}_m^{B_i} \ps \|^2 - c_{2m}^2 - c_{2m+1}^2 \big| &= O(\epsilon)  \,\,\,\,, i \in \{0,1\}\,,
\end{align}
which implies, since $a^2-b^2 = (a-b)(a+b)$,
\begin{align}
\label{35}
\big| \|\mathds{1}_m^{A_i} \ps \| - \sqrt{c_{2m}^2 - c_{2m+1}^2} \big| &= O(\epsilon) \,\,\,\,, i \in \{0,1\} \\ 
\big| \|\mathds{1}_m^{B_i} \ps \| - \sqrt{c_{2m}^2 - c_{2m+1}^2} \big| &= O(\epsilon)  \,\,\,\,, i \in \{0,1\}\,.
\end{align}
Then, we have $\average{\psi| \mathds{1}_m^{A_i}\mathds{1}_m^{B_j}| \psi} \geq \|\mathds{1}_m^{A_i} \ps \| \cdot \|\mathds{1}_m^{B_j} \ps \| - O(\epsilon)$. And so (11) becomes 
\begin{equation}
\mathds{1}_m^{A_i} \ps \approx_{O(\sqrt{\epsilon})} \mathds{1}_m^{B_j} \ps \,\,\,\, \forall i,j \in \{0,1\}\,.
\end{equation}
Next, (12) and (13) hold $O(\epsilon)$-approximately, and so, by Lemma \ref{Bamps lemma}, equations (14) and (15) become 
\begin{align}
Z^{\mathscr{u}}_{A,m} \ket{\psi_m} &\approx_{O(\sqrt{\epsilon})} Z^{\mathscr{u}}_{B,m} \ket{\psi_m} \label{eq38}\\
X^{\mathscr{u}}_{A,m} (\mathds{1} - Z^{\mathscr{u}}_{A,m}) \ket{\psi_m} &\approx_{O(\sqrt{\epsilon})} \tan(\theta_m) X^{\mathscr{u}}_{B,m} (\mathds{1} + Z^{\mathscr{u}}_{A,m}) \ket{\psi_m} \label{eq39}
\end{align}
Now, equation (16) holds $O(\sqrt{\epsilon})$-approximately, which implies that 
\begin{equation}
Z^{\mathscr{u}}_{B,m} \ket{\psi_m} \approx_{O(\sqrt{\epsilon})} \tilde{Z}_{B,m} \ps
\end{equation}
Then, equations (17) and (18) hold $O(\sqrt{\epsilon})$-approximately, and analogously does (21).
So, 
\begin{align}
X^{\mathscr{u}}_{A,m} P_A^{(2m+1)} \ps &\approx_{O(\sqrt{\epsilon})} \tan(\theta_m) X^{\mathscr{u}}_{B,m} P_A^{(2m)} \ps  = \frac{c_{2m+1}}{c_{2m}} X^{\mathscr{u}}_{B,m} P_A^{(2m)} \ps  \label{eq41}\\
X^{'\mathscr{u}}_{A,m} P_A^{(2m+2)} \ps &\approx_{O(\sqrt{\epsilon})} \tan(\theta'_m) X^{'\mathscr{u}}_{B,m} P_A^{(2m+1)} \ps = \frac{c_{2m+2}}{c_{2m+1}} X^{'\mathscr{u}}_{B,m} P_A^{(2m+1)} \ps \label{eq42}
\end{align}
Finally, the calculations in (28) require $O(d)$ uses of \eqref{eq41}, \eqref{eq42}. Hence, we get
\begin{equation}
X_A^{(k)} P_A^{(k)}\ps \approx_{O(d\sqrt{\epsilon})} \frac{c_{2m+1}}{c_0}(X_B^{(k)})^{\dagger} P_A^{(0)}\ps
\end{equation}
Applying Lemma \ref{YNcriterion} gives the desired conclusion of Theorem \ref{thm3}. 

\end{document}